\newcommand{\remove}[1]{}
\newtheorem{theorem}{Theorem}[section]
\newtheorem{claim}[theorem]{Claim}
\newtheorem{lemma}[theorem]{Lemma}
\newtheorem{definition}[theorem]{Definition}
\newtheorem{corollary}[theorem]{Corollary}
\newtheorem{conjecture}[theorem]{Conjecture}
\newtheorem{problem}[theorem]{Problem}
\newcommand{\eps}{\varepsilon}
\newcommand{\F}{\mathbb{F}}
\newcommand{\T}{\mathbb{T}}
\newcommand{\R}{\mathbb{R}}
\newcommand{\Z}{\mathbb{Z}}
\newcommand{\E}{\mathbb{E}}
\newcommand{\etal}{{et al.}}
\newcommand{\polylog}{\text{polylog}}
\newcommand{\poly}{\text{poly}}
\newcommand{\adeg}{\overline{\deg}}
\newcommand{\adegr}{\widetilde{\deg}}
\newcommand{\AC}{\text{AC}^0}
\newcommand{\ACC}{\text{ACC}^0}
\newcommand{\cP}{\mathrm{P}}
\newcommand{\NEXP}{\mathrm{NEXP}}
\newcommand{\NQP}{\mathrm{NQP}}
\newcommand{\modone}{\pmod{1}}
\title{Torus polynomials: an algebraic approach to ACC lower bounds}
\author{
Abhishek Bhrushundi\thanks{Part of this work was done when the author was visiting the University of California, San Diego. Research supported in part by Rutgers AAUP-AFT TA-GA Professional Development Fund, and by NSF grant CCF-1614023.}\\
Rutgers University\\
\texttt{abhishek.bhr@rutgers.edu}
\and
Kaave Hosseini\thanks{Supported by NSF grant CCF-1614023.}\\
University of California, San Diego\\
\texttt{skhossei@ucsd.edu}
\and
Shachar Lovett\thanks{Supported by NSF grant CCF-1614023.}\\
University of California, San Diego\\
\texttt{slovett@ucsd.edu}
\and
Sankeerth Rao\thanks{Supported by NSF grant CCF-1614023.}\\
University of California, San Diego\\
\texttt{skaringu@ucsd.edu}
}
\begin{document}
\maketitle

\abstract{
We propose an algebraic approach to proving circuit lower bounds for $\ACC$ by defining and studying the notion of \textit{torus polynomials}. We show how currently known polynomial-based approximation results for $\AC$ and $\ACC$ can be reformulated in this framework, implying that $\ACC$ can be approximated by low-degree torus polynomials. Furthermore, as a step towards proving $\ACC$ lower bounds for the majority function via our approach, we show that MAJORITY cannot be approximated by low-degree \textit{symmetric} torus polynomials. We also pose several open problems related to our framework.
}
\section{Introduction}

A major goal of complexity theory is to prove Boolean circuit lower bounds, i.e. find explicit Boolean functions that cannot be computed by small size circuits of a given type.
Over the years, three general approaches have been developed to achieve this.

The first approach is based on random restrictions. It applies to circuit classes in which
functions simplify when most inputs are fixed to random values. Classic examples
are the proofs by H{\aa}stad that $\AC$, i.e. polynomial size circuit families
of constant depth consisting of AND, OR, and NOT gates, cannot compute or approximate the PARITY function \cite{haastad1987computational}, and the shrinkage of De Morgan formulas (Boolean circuits consisting of AND, OR, and NOT gates whose underlying graph is a tree) under random restrictions \cite{haastad1998shrinkage}.
However, random restrictions don't seem to be useful against more powerful circuit classes such as $\AC[\oplus]$ --- the class of $\AC$ circuits equipped with PARITY gates.

The second approach is based on approximation by low-degree polynomials. Razborov \cite{razborov1987lower} and Smolensky \cite{smolensky1987algebraic} used this approach to prove lower bounds for $\AC[\oplus]=\AC[2]$, and more generally for $\AC[p]$ for any prime $p$ (This is the class of $\AC$ circuits that are allowed to have MOD$_p$ gates \footnote{a MOD$_p$ gate outputs $1$ if and only if the sum of its inputs is congruent to a non-zero value modulo $p$.}). This technique is based on showing that any function in the circuit class can be approximated by a low-degree polynomial over the finite field $\F_p$. Then, functions that do not admit such an approximation are provably outside the circuit class. A classic example here is that the MAJORITY function cannot be approximated by a low-degree polynomial over $\F_p$, and thus cannot be computed by $\AC[p]$. However, this method also breaks down when considering more powerful circuit classes such as $\AC[6]$, and more generally $\ACC$, i.e. $\AC$ circuits with MOD$_m$ gates where $m$ is a composite that is not a prime power.

The third method involves designing nontrivial satisfiability algorithms and then using them along with classical tools from structural complexity theory (among other techniques and results) to prove circuit lower bounds against $\ACC$ for functions in high complexity classes such as $\NEXP$. Williams \cite{williams2014nonuniform} used this approach to prove that $\NEXP\not\subseteq \ACC$, and very recently, Williams and Murray \cite{murray2018circuit} have extended this to show that $\NQP\not\subseteq \ACC$.

The goal of this paper is to focus on the second approach, namely the use of algebraic techniques, and to try and extend these techniques to prove lower bounds against $\ACC$. We show that an extension of finite field polynomials, which we call \emph{torus polynomials}, is a concrete candidate to achieve this. In particular, using a slightly stronger version of a result of Green \etal~\cite{green1992power}, we show that functions in $\ACC$ can be approximated\footnote{The notion of approximation that we use will be made explicit in \Cref{sec:intro-torus}.} by low-degree torus polynomials. 
We remark that torus polynomials also generalize the class of \emph{nonclassical polynomials} which arose in number theory and in higher order Fourier analysis \cite{tao2012inverse}, and are closely related to them.

This new characterization of $\ACC$ using torus polynomials raises a host of questions on the approximation of Boolean functions by torus polynomials, the most remarkable being the problem of finding an explicit Boolean function that cannot be approximated by low-degree torus polynomials; an answer to this question would imply $\ACC$ lower bounds. In this paper, we take steps towards trying to resolve this question by initiating the study of approximation of Boolean functions by torus polynomials and proving some interesting results along the way. The motivation for our work is two-fold:
\begin{enumerate}
\item Given the slew of recent works exploring properties and applications of nonclassical polynomials\cite{terry2008blog,tao2012inverse,bhatta2013testing,bhowmick2015nonclassical,bhrushundi2017polynomial}, and the fact that torus polynomials are closely related to nonclassical polynomials, we believe that our characterization of $\ACC$ using torus polynomials might pave a way for new $\ACC$ lower bounds.
\item  While the works of Williams \cite{williams2014nonuniform} and Williams and Murray \cite{murray2018circuit} are groundbreaking and prove nontrivial lower bounds against $\ACC$, their proofs are not purely combinatorial/algebraic, and it will be interesting to recover their results using purely algebraic/combinatorial techniques. We hope that our work will renew interest in this line of inquiry.  
\end{enumerate}
\subsection{Torus polynomials}
\label{sec:intro-torus}

Let $\T=\R/\Z$ denote the one-dimensional torus. A \emph{torus polynomial} is simply a real polynomial restricted to the domain $\{0,1\}^n$ and evaluated modulo one\footnote{For $x \in \mathbb{R}$, $x$ modulo one, denoted by $x$ mod $1$, is equal to the fractional part of $x$ given by $x - \lfloor x \rfloor$, where $\lfloor x \rfloor$ is the floor function. For example, $2.6$ mod $1$ is $0.6$, and $-1.3$ mod $1$ is $0.7$.}. Namely,
a degree-$d$ torus polynomial $P:\{0,1\}^n \to \T$ is
$$
P(x) = \sum_{S \subseteq [n], |S| \le d} P_S \prod_{i \in S} x_i \modone,
$$
where $P_S \in \R$. 

As it shall become evident later, torus polynomials extend finite field polynomials in that they provide a uniform way to capture computation of Boolean functions by polynomials over different finite fields --- if a function can be computed by a low-degree polynomial over a finite field then it can be approximated by a low-degree torus polynomial. We will discuss this in detail in \Cref{sec:approx}.

For $z \in \T$,  let $\iota(z)$ denote the unique representative of $z$ in $[-1/2,1/2)$ (e.g., $\iota(0.4) = 0.4$ and $\iota(0.7) = -0.3$). Then we can define its norm, denoted by $|z \modone|$, to be
$$|z \modone| = |\iota(z)|.$$
For $F:\{0,1\}^n \to \T$, define
$$\|F \modone\|_{\infty} := \max_{x \in \{0,1\}^n} |F(x) \modone|.$$ We embed Boolean functions as functions mapping into the torus by enforcing their output to be in $\{0,1/2\} \subset \mathbb{T}$ (This can be achieved by scaling the output of the function by $1/2$). The following is the main definition of approximation that we consider:

\begin{definition}
Let $f:\{0,1\}^n \to \{0,1\}$ be a Boolean function. For $\eps>0$, a torus polynomial $P:\{0,1\}^n \to \T$ is said to $\epsilon$-approximate $f$ if
$$
\left\|P  - \frac{f}{2} \modone \right\|_{\infty} \le \eps.
$$
\end{definition}
Intuitively, a torus polynomial that approximates $f$ takes a value ``close'' to $0$ in the torus $\mathbb{T}$ whenever $f$ takes the value $0$, and takes a value ``close'' to $1/2$ in the torus whenever $f$ takes the value $1$.  

We now introduce the notion of the \textit{toroidal approximation degree} of a Boolean function.

\begin{definition}[Toroidal approximation degree of Boolean functions]
Let $f:\{0,1\}^n \to \{0,1\}$ be a Boolean function. For $\eps>0$, the toroidal $\eps$-approximation degree of $f$ is the minimal $d \ge 0$,
for which there exists a torus polynomial $P:\{0,1\}^n \to \T$ of degree $d$, that satisfies
$$
\left\|P  - \frac{f}{2} \modone \right\|_{\infty} \le \eps.
$$
We denote this by $\adeg_{\eps}(f)=d$.
\end{definition}

We illustrate in \Cref{sec:approx}, in increasing generality, the power of torus polynomials.
The most general result
(\Cref{cor:acc}) shows that if $f$ can be computed by an $\ACC$ circuit then
$$
\adeg_{\eps}(f) \le \polylog(n/\eps).
$$
The proof of this result uses a slightly stronger version of a result of Green \etal~\cite{green1992power}. 

The above characterization paves way for a new approach to proving lower bounds against $\ACC$ for an explicit function, ideally in the class $\cP$. Concretely, we pose the following open problem.

\begin{problem}
\label{problem:explicit}
Find an explicit function $f:\{0,1\}^n \to \{0,1\}$ in $\cP$ whose toroidal
$\eps$-approximation degree is $\omega(\polylog(n/\eps))$. By \Cref{cor:acc},
it cannot be computed by $\ACC$ circuits.
\end{problem}
Williams \cite{williams2014nonuniform} proved that $\NEXP\not\subseteq \ACC$ via designing nontrivial satisfiability algorithms for $\ACC$, 
and Williams and Murray \cite{murray2018circuit} improved the approach to show that $\NQP\not\subseteq \ACC$. Thus, an intermediate goal towards resolving \Cref{problem:explicit} is to prove toroidal approximation lower bounds for functions $f \in \NEXP$ or $f \in \NQP$.

A long-standing open problem in circuit complexity is to show that MAJORITY cannot be computed in $\ACC$. Thus
the following question is natural.
\begin{problem}
What is the toroidal $\eps$-approximation degree of MAJORITY?
\end{problem}
How can one go about answering this question? We now turn to the setting of approximation of Boolean functions by real polynomials -- which prima facie shares some similarities with our setting -- for inspiration, highlighting the main differences between the two notions.

\subsection{Comparison with real polynomials}
\label{sec:intro-compare-real}

Given a function $f:\{0,1\}^n\rightarrow \{0,1\}$, the real $\eps$-approximation degree of $f$, denoted by $\adegr_\eps(f)$, is the minimal $d$ such that there is a real polynomial $P$ of degree $d$ such that $\|f-P\|_\infty \leq \eps$ (this is the $\ell_\infty$-norm restricted to the domain $\{0,1\}^n$). It is clear that $\adeg_\eps(f) \le \adegr_\eps(f)$.

A beautiful result of Nisan and Szegedy \cite{nisan1992degree} shows that the real $\eps$-approximation degree of MAJORITY is $\Omega(\sqrt{n})$ for $\eps < 1/2$. Their proof proceeds in two stages: (i) showing that if a \textit{symmetric} real polynomial $\eps$-approximates MAJORITY then it must have degree $\Omega(\sqrt{n})$; and (ii) that any polynomial that $\eps$-approximates MAJORITY can be \textit{symmetrized} and made into a symmetric polynomial with the same degree and approximation guarantee. 

Attempting to follow the same strategy in the case of torus polynomials, we show in \Cref{cor:maj} in \Cref{sec:lower} that if
one restricts attention to \textit{symmetric} torus polynomials (namely, symmetric real polynomials evaluated modulo one), then the toroidal $(1/20n)$-approximation degree of MAJORITY is $\Omega(\sqrt{n/\log n})$.

Unfortunately, the aforementioned idea of symmetrization cannot be used in the setting of torus polynomials in a straightforward manner and so it's unclear how powerful non-symmetric torus polynomials are compared to their symmetric counterparts. We conjecture that they are not any better at approximating MAJORITY than symmetric torus polynomials:
\begin{conjecture}
The toroidal $(1/20n)$-approximation degree of MAJORITY is $\Omega(\sqrt{n/\log n})$.
\end{conjecture}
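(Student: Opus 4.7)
My plan is to reduce the general (possibly non-symmetric) case to the symmetric case already handled in \Cref{cor:maj} via a symmetrization argument, in the spirit of Nisan and Szegedy's reduction for real approximation degree.

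Concretely, suppose $P \colon \{0,1\}^n \to \T$ is a torus polynomial of degree $d$ that $(1/20n)$-approximates MAJORITY. Fix any real polynomial lift $\tilde P$ of $P$ of the same degree, so that for every $x \in \{0,1\}^n$ we may write $\tilde P(x) = \mathrm{MAJ}(x)/2 + k(x) + \delta(x)$ with $k(x) \in \Z$ and $|\delta(x)| \le 1/(20n)$. Define the symmetrization $\tilde Q(x) = \E_{\sigma \in S_n}[\tilde P(x_\sigma)]$, which is again a real polynomial of degree $\le d$, now symmetric in its inputs. Because MAJORITY is symmetric,
\[
\tilde Q(x) \;=\; \frac{\mathrm{MAJ}(x)}{2} + K(|x|) + \bar\delta(|x|),
\]
where $K(m) := \E_{y: |y|=m}[k(y)]$ and $|\bar\delta(m)| \le 1/(20n)$. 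Ideally we would conclude that $\tilde Q \bmod 1$ is a symmetric torus polynomial of degree $\le d$ that $O(1/n)$-approximates MAJORITY, and then invoke \Cref{cor:maj} to get $d = \Omega(\sqrt{n/\log n})$.

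The main obstacle is that $K(m)$, being an average of integers over $\binom{n}{m}$ inputs, is only guaranteed to be a rational with denominator dividing $\binom{n}{m}$, and need not be anywhere near an integer. In other words, reduction modulo $1$ does not commute with averaging over permutations: $\tilde Q \bmod 1$ can be very far from $\mathrm{MAJ}/2 \bmod 1$, even though $\tilde Q$ tracks $\mathrm{MAJ}/2$ up to a rational correction in $\ell_\infty$. The natural attempt to bypass this is to use the freedom in choosing the lift: $\tilde P$ is determined only up to addition of an integer-valued polynomial of degree $\le d$, i.e.\ $I = \sum_{|S|\le d} c_S \prod_{i\in S} x_i$ with $c_S \in \Z$. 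Such a shift replaces $K(m)$ by $K(m) + \sum_{s\le d} a_s \binom{m}{s}/\binom{n}{s}$ for integers $a_s$, and one would want to choose $I$ so that the adjusted $K(m)$ lies within $1/(20n)$ of an integer for every $m \in \{0, 1, \dots, n\}$. Whether the lattice of available shifts is dense enough (after taking $\bmod\ 1$) to simultaneously correct all $n+1$ residues is, in my view, the crux of the conjecture.

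If this direct symmetrization cannot be forced through, two fallbacks seem promising. One is to develop a dual-witness / LP-duality argument directly for the mixed-integer feasibility problem that defines toroidal approximation: the infeasibility of approximating $\mathrm{MAJ}/2$ by $\tilde P + k$ with $\tilde P$ of degree $d$ and $k \in \Z^{\{0,1\}^n}$ should, by a suitable separation theorem, be witnessed by some signed measure on $\{0,1\}^n$ that is orthogonal to both degree-$d$ polynomials and integer-valued functions in an appropriate sense. The second is to pass to the complex exponential $g(x) = e^{2\pi i P(x)}$, which is close to $(-1)^{\mathrm{MAJ}(x)}$ in $L_\infty$ and which factors on $\{0,1\}^n$ as $\prod_{|S|\le d}\bigl(1 + (e^{2\pi i P_S} - 1)\prod_{i\in S} x_i\bigr)$; one would then hope to import the $\Omega(\sqrt{n\log n})$ real approximation-degree lower bound for MAJORITY through a structural analysis of the Fourier support or spectral norm of this product. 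In all three approaches, the central difficulty is the same: $\bmod\ 1$ interacts poorly with linearity, which is exactly what distinguishes torus polynomials from real polynomials and makes the symmetrization step of Nisan--Szegedy non-trivial to transplant.
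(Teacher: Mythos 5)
This statement is posed as a \emph{conjecture} in the paper, not a theorem; the authors explicitly leave it open and do not supply a proof. So there is nothing in the paper to compare your argument against, and indeed your proposal does not constitute a proof either --- you say as much yourself.

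Your analysis of the symmetrization obstacle is accurate and closely mirrors the authors' own remark in \Cref{sec:intro-compare-real} that ``the aforementioned idea of symmetrization cannot be used in the setting of torus polynomials in a straightforward manner.'' The precise issue you identify --- that the integer part $k(x)$ of a lift of $P$ averages to a rational $K(m)$ with denominator $\binom{n}{m}$, and that reduction modulo $1$ does not commute with averaging over $S_n$ --- is exactly the reason \Cref{cor:maj} cannot simply be bootstrapped to the non-symmetric case. Your observation that the lift is determined only up to an integer-valued polynomial $I$ of degree $\le d$, and that one would need the lattice of induced shifts $\sum_{s \le d} a_s \binom{m}{s}/\binom{n}{s}$ (taken modulo $1$) to simultaneously correct all $n+1$ values $K(0),\dots,K(n)$, is a sharp reformulation of the difficulty; but you give no argument that such a choice of $I$ exists, and for $d = o(\sqrt{n/\log n})$ the number of free integer parameters ($d+1$, after symmetrizing $I$ as well) is far smaller than the number of constraints ($n+1$), so a naive dimension count works against you. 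The two fallback directions (an LP-duality / separating-witness argument for the mixed-integer feasibility problem, and the complex-exponential $g(x) = e^{2\pi i P(x)}$) are reasonable things to try, but as written they are sketches of possible programs rather than arguments; in particular, the product expansion of $g$ you write down has $2^{\binom{n}{\le d}}$ terms and you do not indicate how to control its Fourier mass, and the LP-duality route would need to contend with the fact that the integer shifts make the feasible set non-convex, so standard separating-hyperplane theorems do not apply directly. In short: you have correctly identified why the Nisan--Szegedy reduction breaks, which is genuine and useful insight, but you have not closed the gap, and the conjecture remains open.
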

We remark that a positive answer to the above conjecture will give an algebraic proof that MAJORITY is not in $\ACC$.

Let $\Delta_w:\{0,1\}^n \to \{0,1\}$ denote the delta function which takes the value $1$ on inputs of Hamming weight $w$ and is $0$ elsewhere. En route to proving the aforementioned lower bound for MAJORITY we also prove lower bounds for the delta functions in \Cref{thm:delta-lb}, showing that one needs symmetric torus polynomials of degree $\Omega(\sqrt{n/\log n})$ in order to be able to $(1/20n)$-approximate the delta functions. \\
Somewhat surprisingly, for relatively large values of $\eps$, the delta functions can be nontrivially $\eps$-approximated by low-degree \textit{symmetric} torus polynomials. In particular, we show in \Cref{lemma:delta} in \Cref{sec:upper} that for every delta function there is a symmetric torus polynomial of degree $\polylog(n/\eps)/\eps$ that $\eps$-approximates it, and thus
$$
\adeg_{\eps}(\Delta_w) \le \frac{\polylog(n/\eps)}{\eps}.
$$
This kind of dependence of the toroidal approximation degree on $\eps$ is quite interesting, and is unlike the case of real approximation --- the real approximation degree of the delta functions is $\Omega(\sqrt{n})$ for both small and large values of $\eps$. In fact, for constant $\eps$, this also shows a super-polynomial separation between real and toroidal approximation degree.

 This also highlights other major differences between the real and the toroidal setting. Nisan and Szegedy \cite{nisan1992degree} show that for every Boolean function the real approximation degree is polynomially related to the degree of exact representation by real polynomials. However, in the case of torus polynomials, this is not true: the delta functions require the degree to be $\Omega(n)$\footnote{To see this, note that the delta function $\Delta_n(x)$ has a unique representation as a torus polynomial given by $\Delta_n(x) = \frac{x_1\cdots x_n}{2}$.} for exact representation whereas their toroidal $1/3$-approximation degree is $O(\polylog(n))$.

An interesting property of real approximation is its amenability to amplification, namely the fact that, for any Boolean function $f$ and $\eps < 1/3$, given a polynomial $p$ of degree $d$ that $1/3$-approximates $f$, it can be transformed into a polynomial $p'$ of degree $d' = O(d \log(1/\eps))$ that $\eps$-approximates $f$. In other words, $\adegr_{\eps}(f) \le O(\adegr_{1/3}(f) \log(1/\eps))$.
It is not clear whether such a transformation is possible in the case of toroidal approximation. In the case of real approximation, the transformation is symmetry preserving, but, given the results for the delta functions discussed in the previous paragraphs, we should not expect this in the toroidal case. This motivates the following problem.
\begin{problem}
How is $\adeg_\eps(f)$ related to $\adeg_{1/3}(f)$?
\end{problem}

\subsection{Comparison with nonclassical polynomials}
\label{sec:intro-nonclassical}
As mentioned before, torus polynomials generalize the class of nonclassical polynomials (this will be evident from the definition of nonclassical polynomials stated below). We remark that the results of this paper can be similarly phrased in terms of nonclassical polynomials instead of torus polynomials.
This is because for the purpose of approximation of Boolean functions -- which is the topic of this paper -- torus polynomials and nonclassical polynomials are equivalent, as we shall see below. However, torus polynomials are simpler to describe (they are just real polynomials evaluated modulo $1$) and more elegant (they are field independent),
and hence we believe are a better choice for an algebraic model and for stating our results.

We now give the definition of nonclassical polynomials; here we provide what is known as the global definition of nonclassical polynomials over $\{0,1\}^n$.
For simplicity, we restrict our attention to nonclassical polynomials defined over $\F_2^n$, but note that the results generalize to nonclassical polynomials defined over $\F_p^n$ for any constant prime $p$.

\begin{definition}[Nonclassical polynomials]
	A function $Q:\{0,1\}^n \rightarrow \T$ is a nonclassical polynomial (over $\F_2$) of degree at most $d$ if and only if it can be written as
	$$Q(x) = \alpha + \sum_{\emptyset \subset S\subseteq [n]; k\geq 0;\\ 0 <|S|+k \leq d} \frac{c_{S,k}}{2^{k+1}} \prod_{i\in S} x_i \modone$$ where $c_{S,k}\in \{0,1\}$ and $\alpha \in \T$.
\end{definition}
The following simple claim shows that torus polynomials can be approximated by nonclassical polynomials.
\begin{claim}
\label{claim:nonclassical}
Let $P:\{0,1\}^n \rightarrow \T$ be a torus polynomial of degree at most $d$ and let $\eps\in (0,1)$. Then there exists a nonclassical polynomial $Q$ of degree at most $O(d\log n + \log(1/\eps))$ such that $\|P-Q \modone\|_\infty \leq \eps$.
\end{claim}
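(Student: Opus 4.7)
The plan is to approximate each coefficient of $P$ by a truncated binary expansion and then observe that the resulting object is a nonclassical polynomial whose degree we can control.

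Concretely, write $P(x) = P_\emptyset + \sum_{\emptyset \subsetneq S, |S|\le d} P_S \prod_{i\in S} x_i \modone$. For each $S$ with $|S| \ge 1$, expand the fractional part of $P_S$ in binary as $P_S \equiv \sum_{k \ge 0} \frac{c_{S,k}}{2^{k+1}} \modone$ with $c_{S,k} \in \{0,1\}$, and fix a truncation level $K$ to be chosen below. Define
\[
Q(x) \;=\; P_\emptyset \;+\; \sum_{\emptyset \subsetneq S,\, |S|\le d} \;\sum_{0 \le k \le K} \frac{c_{S,k}}{2^{k+1}} \prod_{i\in S} x_i \modone.
\]
Taking $\alpha = P_\emptyset \modone$, this is by inspection a nonclassical polynomial (in the sense of the global definition just stated). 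Its degree is the maximum of $|S| + k$ over terms actually appearing, which is at most $d + K$.

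To control the error, observe that on $\{0,1\}^n$ every monomial takes values in $\{0,1\}$, so by the triangle inequality in $\T$,
\[
\left\| P - Q \modone \right\|_\infty \;\le\; \sum_{\emptyset \subsetneq S,\, |S|\le d} \left| P_S - \sum_{k\le K} \frac{c_{S,k}}{2^{k+1}} \modone\right| \;\le\; N \cdot 2^{-(K+1)},
\]
where $N = \sum_{i=1}^d \binom{n}{i} \le n^{d+1}$ is the number of nonempty monomials of degree at most $d$. Choosing $K = \lceil \log_2(N/\eps) \rceil$ makes the right hand side at most $\eps$, and then $d + K = O(d\log n + \log(1/\eps))$, which is the claimed bound.

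The argument is essentially a counting/precision calculation, so there is no real obstacle; the only point of care is to set $K$ large enough that the sum of per-coefficient errors, rather than the worst individual error, is below $\eps$, which forces the extra $\log n$ factor multiplying $d$ in the final degree bound.
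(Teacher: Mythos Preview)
Your proof is correct and is essentially identical to the paper's: both truncate the binary expansion of each nonconstant coefficient at a common level, observe that the result is a nonclassical polynomial of degree at most $d$ plus the truncation level, and bound the total error by the number of monomials (at most $\binom{n}{\le d}$) times the per-coefficient truncation error, which forces the truncation level to be $O(d\log n + \log(1/\eps))$.
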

\begin{proof}
	Suppose $P(x) = \alpha+ \sum_{\emptyset \subset S \subseteq [n], |S| \le d} P_S \prod_{i \in S} x_i \modone$. We can assume without loss of generality that $P_S \in [0,1)$ for all $S$. We approximate each $P_S$ separately using dyadic rationals. Let $P_S = 0.c_{S,0} c_{S,1} c_{S,2} \ldots$, where $c_{S,i} \in \{0,1\}$, be
its binary expansion. Let $t \ge 1$ be a parameter that we will fix later, and note that $$\left|P_S - \sum_{0\leq k\leq t} \frac{c_{S,k}}{2^{k+1}}\right|\leq 2^{-t}.$$
Define the nonclassical polynomial
$$Q(x) = \alpha + \sum_{\emptyset \subset S\subseteq [n];  k\geq 0;\\ 0 <|S|+k \leq t+d} \frac{c'_{S,k}}{2^{k+1}} \prod_{i\in S} x_i \modone,$$
where $c'_{S,k} = c_{S,k}$ for $|S| \le d,k \le t$, and is $0$ otherwise. Then $\deg(Q)\le t + d$, and $$|P(x)-Q(x) \modone| \le {n \choose \le d} 2^{-t}$$ for all $x \in \{0,1\}^n$.
Setting $t=O(d \log n + \log(1/\eps))$ completes the proof.
\end{proof}

Recall that our goal, motivated by proving $\ACC$ lower bounds, is to find a Boolean function which cannot be $1/\poly(n)$-approximated by a torus polynomial of degree $\polylog(n)$.
Given \Cref{claim:nonclassical}, this is equivalent to the problem of finding a Boolean function which cannot be $1/\poly(n)$-approximated by a nonclassical polynomial of degree $\polylog(n)$.
As we mentioned before, owing to the elegance and ease of description of torus polynomials relative to nonclassical polynomials, torus polynomials make for a more convenient choice in our setting.

\subsection{Comparison with other notions of approximation}
\label{subsec:compare}
It's clear from our discussion in the previous section that torus polynomials are closely related to nonclassical polynomials, and so it's worthwhile to discuss two notions of approximation of Boolean functions by nonclassical polynomials that have been studied in the literature. The first deals with the \textit{exact} computation of a Boolean function by a nonclassical polynomial on a nontrivial fraction of the domain~\cite{bhowmick2015nonclassical}. For example, the work of Bhrushundi \etal \cite{bhrushundi2017polynomial} shows that any polynomial that computes MAJORITY correctly even on two-thirds of the points must have degree $\Omega(\sqrt{n})$. While many of these bounds for nonclassical polynomials should also hold for torus polynomials, we remark that they are not relevant to our setting since our notion of approximation (i.e., point-wise) is incomparable to the above notion.

The second notion is that of correlation with polynomials, which was studied, for example, by Bhowmick and Lovett\cite{bhowmick2015nonclassical}. Without getting into definitions here, we note that this notion of approximation is \textit{weaker} than that of point-wise approximation\footnote{By this we mean that if a function is point-wise approximated by a low-degree torus polynomial then it is also approximated by that polynomial in the correlation sense.}, and thus for the purpose of proving lower bounds for $\ACC$ it makes sense to work with only the latter. This also means that the upper bound results proved in the work of Bhowmick and Lovett (i.e., showing how certain Boolean functions can be approximated by low-degree nonclassical polynomials in the correlation sense) don't have any implications for our setting. Even their lower bound results, unfortunately, are not useful for us given that they only work for polynomials of degree $<< \log(n)$, whereas we are dealing with polynomials of degree $\polylog(n)$. 
\subsection{Natural proofs}
\label{sec-natural}

The natural proofs barrier of Razborov and Rudich \cite{razborov1997naturalproof} isn't really a problem for our approach since we are only trying to prove lower bounds against $\ACC$ and pseudorandom generators are not believed to be contained in this class. It is also not clear whether the property in question, i.e. (in)approximability by torus polynomials, is \textit{natural}, and, in particular, it will be interesting to investigate whether one can efficiently distinguish between Boolean functions which can be approximated by low-degree torus polynomials and a random Boolean function, i.e. whether this property is \textit{constructive}:

\begin{problem}
Given the truth table of a function $f:\{0,1\}^n\to \{0,1\}$ and $\eps>0$, decide in polynomial time (in $2^n$ and $1/\eps$)
whether $\adeg_{\eps}(f) \leq \polylog(n/\eps)$.
\end{problem}

\paragraph{Paper organization.}
In \Cref{sec:approx}, we prove toroidal approximation results for Boolean functions in bounded circuit classes such as $\AC[p]$ and $\ACC$.
In \Cref{sec:lower}, we prove lower bounds against symmetric torus polynomials approximating the MAJORITY function and the delta functions.
In \Cref{sec:upper}, we show that symmetric torus polynomials have surprising power in approximating the delta functions when the error $\eps$ is not too small. We introduce definitions and notation along the way, as and when needed.

\section{Approximation of circuit classes}
\label{sec:approx}
In this section, we illustrate how the framework of approximation by torus polynomials captures computation of Boolean functions in various models of computation. We begin by showing that functions that are computable by low-degree polynomials over finite fields can be approximated by low-degree torus polynomials. \\
It might be instructive to keep in mind that, for the scope of the entire paper, whenever we consider polynomials we restrict ourselves to only \textit{multilinear} polynomials, i.e. polynomials in which the maximum degree of any variable is at most $1$. Even if we encounter polynomials that do not adhere to this form during intermediate steps in certain proofs, we can always \textit{multilinearize} the polynomials by making the degrees of all the variables equal to $1$ wherever they appear. It suffices to consider multilinear polynomials because we always restrict the variables to the domain $\{0,1\}$.
\subsection{Polynomials over finite fields}
Let $\F_p$ be a prime finite field. We say a polynomial $P(x) \in \F_p[x_1, \ldots, x_n]$ computes a Boolean function $f$ if $$\forall x \in \{0,1\}^n,\ f(x) = P(x).$$
 Consider a function $f$ which is computed by a low-degree polynomial over $\F_p$. 
We will now show that it can be approximated by a low-degree torus polynomial. We would require the following theorem on modulus-amplifying
polynomials of Beigel and Tarui~\cite{beigel1991acc}, following previous results of Toda~\cite{toda1991pp} and Yao~\cite{yao1985separating}.

\begin{lemma}[Beigel and Tarui \cite{beigel1991acc}]
\label{lemma:modamp}
For every $k \ge 1$, there exists a univariate polynomial $A_k:\Z \to \Z$ of degree $2k-1$ such that the following holds. For every $m \ge 2$,
\begin{itemize}
\item If $x \in \Z$ satisfies $x \equiv 0 \pmod{m}$ then $A_k(x) \equiv 0 \pmod{m^k}$.
\item If $x \in \Z$ satisfies $x \equiv 1 \pmod{m}$ then $A_k(x) \equiv 1 \pmod{m^k}$.
\end{itemize}
\end{lemma}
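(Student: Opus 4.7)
The plan is to recognize that the two modular conditions are really a Chinese-remainder-type requirement: $A_k(x) \equiv 0 \pmod{m^k}$ whenever $x \equiv 0 \pmod{m}$, and $A_k(x) \equiv 1 \pmod{m^k}$ whenever $x \equiv 1 \pmod{m}$, and both conditions must hold \emph{uniformly in $m$}. The cleanest way to force this for every $m$ at once is to ask for the stronger polynomial identities $A_k(x) \in x^k \cdot \Z[x]$ and $A_k(x) - 1 \in (1-x)^k \cdot \Z[x]$. If these hold, then $x \equiv 0 \pmod m$ gives $x^k \equiv 0 \pmod{m^k}$, hence $A_k(x) \equiv 0 \pmod{m^k}$; symmetrically, $x \equiv 1 \pmod m$ gives $(1-x)^k \equiv 0 \pmod{m^k}$, hence $A_k(x) \equiv 1 \pmod{m^k}$. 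So everything reduces to producing such an $A_k \in \Z[x]$ of degree $2k-1$.

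The existence of such an $A_k$ is equivalent to producing $P, Q \in \Z[x]$ with $x^k P(x) + (1-x)^k Q(x) = 1$, a Bezout identity for the coprime polynomials $x^k$ and $(1-x)^k$ in $\Z[x]$. To get $P$ and $Q$ explicitly with integer coefficients and the right degree, I would start from the trivial identity
$$1 = \bigl(x + (1-x)\bigr)^{2k-1} = \sum_{j=0}^{2k-1} \binom{2k-1}{j}\, x^{2k-1-j}(1-x)^j$$
and split the sum at $j=k$. In the terms with $j \le k-1$ the factor $x^{2k-1-j}$ is divisible by $x^k$, and in the terms with $j \ge k$ the factor $(1-x)^j$ is divisible by $(1-x)^k$. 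Pulling out these common factors produces explicit $P, Q \in \Z[x]$, and setting $A_k(x) := x^k P(x)$ satisfies both required polynomial divisibilities. A direct degree count gives $\deg P = k-1$, so $\deg A_k = k + (k-1) = 2k-1$, matching the claim.

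The only subtle point is getting the degree to land exactly on $2k-1$ rather than something larger: a naive iterative construction (e.g., repeatedly squaring or composing low-degree amplifiers) would overshoot. The exponent $2k-1$ in the binomial expansion above is the smallest one that lets one split the expansion cleanly into an $x^k$-divisible half and a $(1-x)^k$-divisible half, and this is precisely what pins the degree of $A_k$ to $2k-1$. Beyond that, the two modular conclusions follow immediately from the polynomial divisibilities, so the remainder of the argument is straightforward verification.
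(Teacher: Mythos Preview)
Your argument is correct and is in fact the standard Beigel--Tarui construction: writing $1=(x+(1-x))^{2k-1}$ and splitting the binomial sum at $j=k$ produces $A_k(x)=\sum_{j=0}^{k-1}\binom{2k-1}{j}x^{2k-1-j}(1-x)^j$, which lies in $x^k\Z[x]$ while $A_k(x)-1$ lies in $(1-x)^k\Z[x]$, and the degree is exactly $2k-1$ since the leading coefficient $\sum_{j=0}^{k-1}(-1)^j\binom{2k-1}{j}=(-1)^{k-1}\binom{2k-2}{k-1}\ne 0$. Note, however, that the paper does not actually prove this lemma; it is quoted as a black box from \cite{beigel1991acc}, so there is no ``paper's own proof'' to compare against---your write-up simply supplies the omitted argument.
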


\begin{lemma}
\label{lemma:poly_Fp}
Let $f:\{0,1\}^n \to \{0,1\}$. Assume that $f$ can be computed by a polynomial over $\F_p$ of degree $d$.
Then for every $\eps>0$,
$$
\adeg_{\eps}(f) \le O(d \log(1/\eps)).
$$
\end{lemma}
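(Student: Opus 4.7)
The plan is to lift the $\F_p$ polynomial to $\Z$, apply the Beigel--Tarui modulus amplifier to sharpen the congruence from mod $p$ to mod $p^k$, and then rescale so that the two values of $f$ land near $0$ and $1/2$ on the torus, respectively. First, lift $P\in\F_p[x_1,\ldots,x_n]$ of degree $d$ to an integer-coefficient polynomial $\tilde P$ of the same degree with $\tilde P(x)\equiv f(x)\pmod p$ for every $x\in\{0,1\}^n$ (for instance, take coefficients in $\{0,\ldots,p-1\}$). For $p=2$ the argument essentially ends here: $Q(x):=\tilde P(x)/2$ is a torus polynomial of degree $d$ with $Q(x)\equiv f(x)/2\modone$ exactly, since $\tilde P(x)-f(x)$ is an even integer.

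Assume from now on that $p$ is odd. Apply \Cref{lemma:modamp} with $m=p$ to obtain the integer polynomial $R(x):=A_k(\tilde P(x))$ of degree $d(2k-1)$ satisfying $R(x)\equiv f(x)\pmod{p^k}$, so $R(x)=f(x)+p^k N(x)$ for some integer-valued $N(x)$. The naive choice $R(x)/(2p^k)$ does not work because $N(x)/2$ contributes an uncontrolled half-integer modulo $1$. The fix is to multiply $R$ by the integer $(p^k+1)/2$, which is an integer precisely because $p^k$ is odd:
\[
Q(x)\;=\;\frac{p^k+1}{2p^k}\,R(x).
\]
Then $\tfrac{p^k+1}{2}N(x)\in\Z$, so modulo $1$,
\[
Q(x)\;\equiv\;\frac{p^k+1}{2p^k}\,f(x)\;=\;\frac{f(x)}{2}+\frac{f(x)}{2p^k}\modone,
\]
which is within $1/(2p^k)$ of $f(x)/2$ on the torus for every $x\in\{0,1\}^n$.

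Choosing $k=\lceil\log_p(1/(2\eps))\rceil$ makes the error at most $\eps$, while the degree is $d(2k-1)=O(d\log(1/\eps))$; multilinearizing via $x_i^2=x_i$ does not raise the degree on $\{0,1\}^n$. The only non-routine step, and thus the main obstacle, is the $(p^k+1)/2$ rescaling for odd $p$: without it, $R/p^k\bmod 1\in\{0,1/p^k\}$ puts both branches in a tiny neighborhood of $0$, which cannot separate $f=0$ from $f=1$ around $1/2$. This specific multiplier (essentially $2^{-1}\bmod p^k$ divided by $p^k$) shifts the ``$f=1$'' branch to a neighborhood of $1/2$ while keeping the residual drift in $N(x)$ an integer, so it vanishes mod $1$.
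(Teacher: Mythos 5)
Your proof is correct and essentially identical to the paper's: both lift to $\Z$, apply the Beigel--Tarui amplifier $A_k$ to upgrade the congruence to modulus $p^k$, and then rescale by $q/p^k$ for an integer $q$ so that the $f=1$ branch lands near $1/2$ on the torus while the mod-$p^k$ drift becomes an integer multiple and vanishes modulo $1$. The paper simply picks any $q\in\{0,\ldots,p^k-1\}$ with $\left|q/p^k - 1/2 \modone\right|\le\eps$ and forms $qA_k(F(x))/p^k \modone$; your explicit $q=(p^k+1)/2$ for odd $p$ (and your exact $p=2$ shortcut, which corresponds to the paper's choice $q=2^{k-1}$) is a concrete instance of that same step.
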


\begin{proof}
Since $f$ is computable by degree-$d$ polynomials over $\F_p$, there must be an integer polynomial $F(x)$ (i.e., a polynomial with coefficients in $\mathbb{Z}$) of degree $d$ such that
$$
F(x) \equiv f(x) \pmod{p} \qquad \forall x \in \{0,1\}^n.
$$
Let $k \ge 1$ be large enough so that $1/p^k \le \eps$. Let $0 \le q \le p^k-1$ be such that $$\left| \frac{q}{p^k} - \frac{1}{2} \modone\right| \le \eps.$$
Define
$$
G(x) = \frac{q A_k(F(x))}{p^k} \modone.
$$
We claim that \begin{align}\label{eq:bound1}\left|G(x)-\frac{f(x)}{2}\modone\right| \le \eps\end{align} for all $x$. To see this, fix $x$, and recall that $F(x) \equiv f(x) \pmod{p}$, which means
that $A_k(F(x)) \equiv f(x) \pmod{p^k}$, and hence $G(x) \equiv \frac{q}{p^k} f(x) \modone$. \eqref{eq:bound1} now follows from our choice of $q$.

Noting that the degree of $G$
is $(2k-1) d \le O(d \log(1/\eps))$ completes the proof.
\end{proof}

We will later need the following simple variant of \Cref{lemma:poly_Fp}. Its proof is identical.

\begin{lemma}
\label{lemma:poly_Fp_variant}
Let $f:\{0,1\}^n \to \{0,1\}$. Assume that $f$ can be computed by a polynomial over $\F_p$ of degree $d$.
Then for every $\alpha \in [0,1]$ and every $\eps>0$, there exists a torus polynomial $P:\{0,1\}^n \to \T$ of degree $O(d \log(1/\eps))$ such that
$$
\|P - \alpha f \modone\|_\infty \le \eps.
$$
\end{lemma}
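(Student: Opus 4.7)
The plan is to mimic the proof of \Cref{lemma:poly_Fp} almost verbatim, the only difference being the choice of the scaling integer $q$. As before, since $f$ is computed by a degree-$d$ polynomial over $\F_p$, I can lift it to an integer polynomial $F(x) \in \Z[x_1,\ldots,x_n]$ of degree $d$ satisfying $F(x) \equiv f(x) \pmod{p}$ on $\{0,1\}^n$. Applying the Beigel--Tarui modulus-amplifying polynomial $A_k$ from \Cref{lemma:modamp}, the composition $A_k \circ F$ is an integer polynomial of degree $(2k-1)d$ with $A_k(F(x)) \equiv f(x) \pmod{p^k}$ for all $x \in \{0,1\}^n$.

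Next, I will pick $k$ to be the smallest integer with $1/p^k \le \eps$, so that $k = O(\log(1/\eps))$ and the degree of $A_k \circ F$ is $(2k-1)d = O(d\log(1/\eps))$. In the original proof the key step was choosing $q$ to approximate $1/2$ on the torus; here, I instead choose $q \in \{0,1,\ldots,p^k-1\}$ to be the nearest integer making $q/p^k$ approximate $\alpha$, which gives
$$\left|\frac{q}{p^k} - \alpha \modone\right| \le \frac{1}{p^k} \le \eps.$$
Such a $q$ exists because the dyadic-style grid $\{0, 1/p^k, 2/p^k, \ldots, (p^k-1)/p^k\}$ is $1/p^k$-dense in $[0,1)$ and $\alpha \in [0,1]$.

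Finally, I define the torus polynomial
$$P(x) = \frac{q \cdot A_k(F(x))}{p^k} \modone.$$
Its degree is $(2k-1)d = O(d\log(1/\eps))$, as required. To verify the approximation guarantee, fix $x \in \{0,1\}^n$. If $f(x)=0$, then $A_k(F(x)) \equiv 0 \pmod{p^k}$ so $P(x) \equiv 0 \modone$ and hence $|P(x) - \alpha f(x) \modone| = 0 \le \eps$. If $f(x)=1$, then $A_k(F(x)) \equiv 1 \pmod{p^k}$, so $P(x) \equiv q/p^k \modone$, and
$$\left|P(x) - \alpha f(x) \modone\right| = \left|\frac{q}{p^k} - \alpha \modone\right| \le \eps$$
by the choice of $q$. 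Taking the supremum over $x$ gives the desired bound.

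There is essentially no obstacle here; the entire content of the argument sits in \Cref{lemma:modamp} and is already captured by the previous proof. The only adaptation is the replacement of the target $1/2$ by an arbitrary $\alpha \in [0,1]$, which only affects how $q$ is chosen, and the dyadic-grid density argument makes this trivial.
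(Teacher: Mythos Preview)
Your proof is correct and is exactly the approach the paper intends: the paper states that the proof of \Cref{lemma:poly_Fp_variant} is ``identical'' to that of \Cref{lemma:poly_Fp}, and your argument reproduces that proof with the single change of approximating $\alpha$ instead of $1/2$ by $q/p^k$.
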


\subsection{Circuit class $\AC[p]$}
Recall that, for a fixed prime $p$, $\AC[p]$ is the class of functions computable by polynomial size circuits of constant depth, consisting of AND, OR, NOT, and MOD$_p$ gates. Here a MOD$_p$ gate is one that outputs $1$ if and only if the sum of its inputs is congruent to a non-zero value modulo $p$.
  
Let $f:\{0,1\}^n \to \{0,1\}$ be a function in $\AC[p]$. We show that it can also be approximated by low-degree torus polynomials.
The starting point is the classic result of Razborov~\cite{razborov1987lower} and Smolensky~\cite{smolensky1987algebraic} which
shows that $\AC[p]$ circuits can be approximated by random low-degree polynomials over $\F_p$ in the following sense.

\begin{theorem}[Razborov-Smolensky\cite{razborov1987lower,smolensky1987algebraic}]
\label{thm:RS}
Let $f:\{0,1\}^n \to \{0,1\}$ be computed by an $\AC[p]$ circuit. Then for every $\eps>0$,
there exists a distribution $\nu$ supported on polynomials $F:\F_p^n \to \{0,1\}$
of degree $d=\polylog(n/\eps)$ such that
$$
\Pr_{P \sim \nu}[P(x) = f(x)] \ge 1-\eps \qquad \forall x \in \{0,1\}^n.
$$
\end{theorem}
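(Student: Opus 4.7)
The plan is to follow the classical Razborov-Smolensky argument: replace each gate of the $\AC[p]$ circuit by a (random) polynomial over $\F_p$ that agrees with the gate's output on any fixed Boolean input with high probability, then compose these approximators bottom-up through the circuit and apply a union bound over all gates.

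First I would write down the per-gate randomized approximators. Let $C$ be an $\AC[p]$ circuit of depth $D = O(1)$ and size $s = \poly(n)$ computing $f$, and set $\delta = \eps/s$. A NOT gate is exactly $1-y$ (degree $1$); a MOD$_p$ gate on $y_1,\ldots,y_m$ is exactly $(y_1+\cdots+y_m)^{p-1}$ by Fermat's little theorem (degree $p-1 = O(1)$). For an OR gate of fan-in $m$, sample $t$ independent uniform vectors $a^{(1)},\ldots,a^{(t)} \in \F_p^m$ and define
$$\widetilde{\OR}(y) \;=\; 1 - \prod_{j=1}^{t}\Bigl(1 - (a^{(j)}\cdot y)^{p-1}\Bigr).$$
The key point is that for any fixed nonzero $y \in \{0,1\}^m$ the linear form $a^{(j)}\cdot y$ is uniform over $\F_p$, so $(a^{(j)}\cdot y)^{p-1}$ equals $1$ with probability $1 - 1/p \geq 1/2$. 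Hence the product is $0$ (so $\widetilde{\OR}(y)=1$) except with probability at most $(1/p)^t$, while on $y=0$ the identity holds deterministically. Choosing $t = O(\log(s/\eps))$ drives the per-gate failure probability below $\delta$, at degree $(p-1)t = O(\log(s/\eps))$. AND gates are handled dually via De Morgan. Note that each factor $1 - (a^{(j)}\cdot y)^{p-1}$ lies in $\{0,1\}$ whenever its inputs do, so the approximator output stays Boolean on Boolean inputs, which will be crucial for the composition.

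Next I would compose these approximators bottom-up through the circuit and union-bound. The resulting random polynomial $P$ has degree at most $[O(\log(s/\eps))]^D = \polylog(n/\eps)$, since $D$ is constant and $s = \poly(n)$. Fix an input $x \in \{0,1\}^n$ and say a gate \emph{errs on $x$} if, when fed the \emph{true} Boolean values of its predecessors on $x$, its approximator outputs the wrong bit. A simple induction up the circuit shows that if no gate errs then every approximator is in fact evaluated on its intended Boolean arguments and returns the correct bit, so $P(x) = f(x)$. Each of the $\leq s$ gates errs with probability $\leq \delta = \eps/s$, and union-bounding yields $\Pr_P[P(x) \neq f(x)] \leq \eps$, which is the desired conclusion.

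The one subtlety worth flagging is the error bookkeeping: the per-gate failure probabilities are defined with respect to the true Boolean inputs at each gate, and the union bound combines them correctly only because the event ``no gate errs'' makes all approximators fire on their intended Boolean inputs. This is precisely why the construction must ensure every approximator outputs in $\{0,1\}$ on Boolean inputs, so that the invariant propagates through the circuit; this is automatic for the formulas above, so the argument is routine algebra once the per-gate estimates and the invariant are in place.
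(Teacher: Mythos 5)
The paper states Theorem~\ref{thm:RS} as a citation to Razborov and Smolensky and does not supply a proof, so there is no in-paper argument to compare against. Your reconstruction is the standard Razborov--Smolensky proof, and it is correct: the per-gate randomized approximators (exact polynomials for NOT and MOD$_p$, the $t$-fold random-linear-form trick for OR/AND with $t = O(\log(s/\eps))$), the observation that each approximator is $\{0,1\}$-valued on $\{0,1\}$-inputs so the invariant propagates, the bottom-up composition, and the union bound over the $s$ gates giving failure probability $\eps$ with degree $[O(\log(s/\eps))]^D = \polylog(n/\eps)$ are all the right steps. You also implicitly handle the paper's requirement that the support consist of polynomials with Boolean range (which the paper separately notes can be forced by composing with $z \mapsto z^{p-1}$), since your gate-by-gate construction already produces Boolean outputs throughout.
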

We can assume without loss of generality that all the polynomials in the support of the distribution $\nu$ have range $\{0,1\}$. This is because given an arbitrary polynomial $P(x)$ over $\F_p$ we can convert it into the polynomial $P'(x) = (P(x))^{p-1}$ which has range $\{0,1\}$ by Fermat's little theorem. Note that the degree of $P'$ is at most $p$ times the degree of $P$ which is not really a problem since $p = O(1)$ for us.

We now show why torus polynomials approximate $\AC[p]$ functions.
\begin{lemma}
\label{lemma:compose_distrib}
Let $f:\{0,1\}^n \to \{0,1\}$. Assume that there exists
a distribution $\nu$ supported on polynomials $F:\F_p^n \to \{0,1\}$
of degree $d$ such that
$$
\Pr_{P \sim \nu}[P(x) = f(x)] \ge 1-\eps \qquad \forall x \in \{0,1\}^n.
$$
Then
$$
\adeg_{3 \eps}(f) \le O(d \log(n/\eps)).
$$
\end{lemma}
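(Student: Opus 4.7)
The plan is to derandomize the ensemble $\nu$ by independent sampling and then convert each sampled $\F_p$-polynomial into a torus polynomial using \Cref{lemma:poly_Fp_variant}. Let $\mu(x) := \E_{P\sim\nu}[P(x)]$. Since $P(x),f(x)\in\{0,1\}$ and they disagree with probability at most $\eps$, the hypothesis gives $|\mu(x)-f(x)|\le \eps$ for every $x$. Drawing $m=\Theta(n/\eps^2)$ independent samples $P_1,\ldots,P_m\sim \nu$, Hoeffding together with a union bound over the $2^n$ inputs shows that with positive probability
$$\left|\tfrac{1}{m}\sum_{i=1}^m P_i(x)-\mu(x)\right|\le \eps \quad \text{for every } x\in\{0,1\}^n.$$
Fixing such a sample, it follows that $\bigl|\tfrac{1}{2m}\sum_i P_i(x)-f(x)/2\bigr|\le \eps$ for all $x$.

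Next, each $P_i$ is a degree-$d$ polynomial over $\F_p$ with range $\{0,1\}$, so \Cref{lemma:poly_Fp_variant} applied with $\alpha = 1/(2m)\in[0,1]$ and error $\eps/m$ produces a torus polynomial $T_i$ of degree $O(d\log(m/\eps))=O(d\log(n/\eps))$ with
$$\left\|T_i-\tfrac{P_i}{2m}\modone\right\|_\infty \le \tfrac{\eps}{m}.$$
Set $T:=T_1+\cdots+T_m$; this is a torus polynomial of degree $O(d\log(n/\eps))$, matching the target bound.

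To control the approximation error I will use the torus triangle inequality $|a+b\modone|\le |a\modone|+|b\modone|$ (easily verified by reducing to representatives in $(-1/2,1/2)$). Iterating it $m$ times gives
$$\left|T(x)-\tfrac{1}{2m}\sum_{i=1}^m P_i(x)\modone\right|\le \sum_{i=1}^m\left|T_i(x)-\tfrac{P_i(x)}{2m}\modone\right|\le \eps.$$
Since both $\tfrac{1}{2m}\sum_i P_i(x)$ and $f(x)/2$ lie in $[0,1/2]$, the mod-one distance between them equals their ordinary absolute difference, which is at most $\eps$ by the sampling step. One more application of the triangle inequality yields $\|T-f/2\modone\|_\infty \le 2\eps \le 3\eps$, completing the proof.

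The only place where care is needed is the mod-one arithmetic when summing $m=\poly(n/\eps)$ torus polynomials: naively one might worry that the cumulative lifts of the $T_i$'s to $\R$ wrap around the circle many times and inflate the error. The torus triangle inequality bypasses this issue entirely, so no explicit choice of real representatives is required; this is the main (and rather mild) technical point of the argument.
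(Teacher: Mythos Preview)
Your proof is correct and follows essentially the same route as the paper's: sample $m=\Theta(n/\eps^2)$ polynomials from $\nu$, apply a Chernoff/Hoeffding plus union bound over $\{0,1\}^n$, convert each sampled $\F_p$-polynomial to a torus polynomial via \Cref{lemma:poly_Fp_variant} with $\alpha=1/(2m)$ and error $\eps/m$, and sum. The only cosmetic differences are that you phrase the concentration step through $\mu(x)=\E_{P\sim\nu}[P(x)]$ rather than counting disagreements directly, and you obtain a slightly sharper $2\eps$ final error where the paper settles for $3\eps$; neither changes the argument.
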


\begin{proof}
By standard Chernoff bounds, if we sample $F_1,\ldots,F_m \sim \nu$ independently for $m=O(n/\eps^2)$ then with high probability,
$$
|\{i \in [m]: F_i(x) \ne f(x)\}| \le 2 \eps m \qquad \forall x \in \{0,1\}^n.
$$
Fix such a sample. Recall that $F_i:\F_p^n \to \{0,1\}$ are computed by degree $d$ polynomials over $\F_p$.
Next, apply \Cref{lemma:poly_Fp_variant} with $\alpha=1/2m$ and error $\eps/m$.
This gives us torus polynomials $P_i:\{0,1\}^n \to \T$ of degree $O(d \log(m/\eps))$ such that
$$
\left|P_i(x) - \frac{1}{2m} F_i(x) \modone \right| \le \frac{\eps}{m} \qquad \forall x \in \{0,1\}^n.
$$
Finally, take
$$
P(x) = P_1(x) + \ldots + P_m(x) \modone.
$$
We claim that $P(x)$ is a torus polynomial which $3 \eps$-approximates $f(x)$. To see this, fix $x \in \{0,1\}^n$, and observe that
$$
\left| P(x) - \frac{F_1(x)+\ldots+F_m(x)}{2m} \modone \right| \le \eps
$$
and
$$
\left| \frac{F_1(x)+\ldots+F_m(x)}{2m} - \frac{f(x)}{2} \modone \right| \le 2 \eps,
$$
and so
$$\left|P(x) - \frac{f(x)}{2} \modone\right| \le 3\eps.$$
This means that$$\adeg_{3 \eps}(f) \le \deg(P) = \max\{\deg(P_i): i \in [m]\} = O(d \log(m/\eps)) = O(d \log(n/\eps)).
$$
\end{proof}

\begin{corollary}
Let $f:\{0,1\}^n \to \{0,1\}$ be a function in $\AC[p]$. Then for every $\eps>0$,
$$
\adeg_{\eps}(f) \le \polylog(n/\eps).
$$
\end{corollary}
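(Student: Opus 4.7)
The plan is to simply chain together the two results already established in this subsection, namely the Razborov--Smolensky approximation theorem (\Cref{thm:RS}) and the composition lemma (\Cref{lemma:compose_distrib}). There is essentially no new content to prove; the corollary follows by adjusting constants.

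Concretely, given $f \in \AC[p]$ and $\eps > 0$, I would first invoke \Cref{thm:RS} with error parameter $\eps' = \eps/3$. This produces a distribution $\nu$ supported on polynomials $F:\F_p^n \to \{0,1\}$ of degree $d = \polylog(n/\eps') = \polylog(n/\eps)$ such that $\Pr_{P \sim \nu}[P(x) = f(x)] \ge 1 - \eps'$ for every $x \in \{0,1\}^n$. Then I would feed this distribution directly into \Cref{lemma:compose_distrib}, which yields
$$
\adeg_{3\eps'}(f) = \adeg_{\eps}(f) \le O(d \log(n/\eps')) = \polylog(n/\eps),
$$
as required.

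Since both of the ingredient statements have already been proved (\Cref{thm:RS} is cited, and \Cref{lemma:compose_distrib} was established just above), there is really no obstacle here: the proof amounts to a one-line composition of the quoted bounds, with the mild bookkeeping of rescaling the error from $\eps$ to $\eps/3$ so that the conclusion of \Cref{lemma:compose_distrib} gives approximation within $\eps$ rather than within $3\eps$. The only thing worth double-checking is that $\polylog(n/\eps)$ remains $\polylog(n/\eps)$ after multiplication by a $\log(n/\eps)$ factor, which is immediate.
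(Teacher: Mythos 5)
Your proof is correct and is exactly the implicit argument the paper intends: chain \Cref{thm:RS} (with error $\eps/3$) into \Cref{lemma:compose_distrib} and observe that the resulting degree bound is still $\polylog(n/\eps)$. The paper leaves this corollary's proof unstated precisely because it is this one-line composition.
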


An interesting question that is motivated by the above results is whether we can have a mini-max type theorem for torus polynomials. Lemma~\ref{lemma:compose_distrib}
gives such a theorem in a very limited regime. The following is an attempt to generalize this.

\begin{problem}
Let $f:\{0,1\}^n \to \{0,1\}$. Assume that for any distribution $\nu$ over $\{0,1\}^n$, there exists a low-degree torus polynomial $P_{\nu}:\{0,1\}^n \to \T$ such that
$$
\E_{x \sim \nu}\left[ \left| P_{\nu}(x) - \frac{f(x)}{2} \modone \right|\right] \le \eps.
$$
Does that imply that the toroidal approximation degree of $f$ is small? That is, does there exist a single low-degree torus polynomial which approximates $f$ on all inputs?
\end{problem}

It might also be useful to assume the stronger assumption that for
any distribution $\nu$ over $\{0,1\}^n$ and any $\alpha \in [0,1]$
there exists a torus polynomial $P_{\nu,\alpha}:\{0,1\}^n \to \T$ of degree $d$ such that
$$
\E_{x \sim \nu} \left[\left| P_{\nu,\alpha}(x) - \alpha f(x) \modone \right|\right]\le \eps.
$$
This is also related to the following problem.

\begin{problem}
Let $f:\{0,1\}^n \to \{0,1\}$. For any $\alpha \in [0,1]$ and $\eps>0$ define $d(\alpha,\eps)$ to be the minimal degree of a torus polynomial $P:\{0,1\}^n \to \T$ such that
$$
\|P - \alpha f \modone \|_{\infty} \le \eps.
$$
What is the behavior of $d(\alpha,\eps)$ as a function of $\alpha$ and of $\eps$? Specifically,
\begin{itemize}
\item Can we bound $\max_{\alpha} d(\alpha,\eps)$ in terms of $d(1/2,\eps)$?
\item Can we bound $\max_{\alpha} d(\alpha,\eps)$ in terms of $\max_{\alpha} d(\alpha,0.1)$?
\end{itemize}
\end{problem}

\subsection{Circuit class $\ACC$}
We now turn our attention to $\ACC$ functions and show that they too can be approximated by low-degree torus polynomials. Recall that a function is in $\ACC$ if it can be computed by polynomial size circuits of constant depth with AND, OR, NOT, and MOD$_m$ gates where $m$ may be composite.

Our starting point is the following result of Green \etal~\cite{green1992power} which extends previous results of~\cite{yao1985separating,beigel1991acc}.

\begin{theorem}[Green et al. \cite{green1992power}]
\label{thm:green_old}
Let $f:\{0,1\}^n \to \{0,1\}$ be computable by $\ACC$ circuits of depth $\ell$ and size $\poly(n)$. Then for any $e \ge 1$ there exists an integer polynomial $F(x)$ of
degree $d=e^{O(\ell)}\log^{O(\ell^2)}n$ which satisfies the following: there is some $k  \ge e$ such that 
\begin{center}
$\forall x \in \{0,1\}^n, \ $ $F(x) = f(x)2^k + E(x) \pmod{2^{k+e}}$ 
\end{center}
for some error $E(x) \leq 2^{k-1}$.
\end{theorem}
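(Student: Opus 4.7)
The plan is to follow the induction-on-depth strategy of Green et al., carefully tracking parameters to obtain the precise error bound $|E(x)| \le 2^{k-1}$ modulo $2^{k+e}$. The essential tool throughout will be the modulus-amplifying polynomial of Beigel and Tarui (\Cref{lemma:modamp}), which is what lets us compose gate representations across the $\ell$ layers of the circuit without losing control of the error. As a preliminary reduction, I would use the Chinese Remainder Theorem to realize each MOD$_m$ gate for composite $m$ as a constant-depth, constant-size circuit of MOD$_{p^a}$ gates for the prime powers $p^a$ dividing $m$, so that it suffices to handle the gate set $\{\AND,\OR,\text{NOT},\MOD_{p^a}\}$.

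\emph{Induction hypothesis.} For each subcircuit $C$ of depth $j$ computing some $g\colon\{0,1\}^n\to\{0,1\}$, I would maintain the invariant that there exists an integer polynomial $F_C$ of degree $d_j = e^{O(j)}\log^{O(j^2)} n$ and a $k_j\ge e$ satisfying $F_C(x)\equiv g(x)\,2^{k_j} + E(x) \pmod{2^{k_j+e}}$ with $|E(x)|\le 2^{k_j-1}$ for every $x\in\{0,1\}^n$. The base case $j=0$ is trivial: set $F_C(x) = 2^e x_i$ for an input wire $x_i$, with $k_0=e$ and $E\equiv 0$.

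\emph{Inductive step.} Given a top gate $G$ applied to the outputs $g_1,\ldots,g_s$ of depth-$(j{-}1)$ subcircuits represented by $F_1,\ldots,F_s$, I would proceed in two sub-steps. First, ``clean'' each $F_i$ using \Cref{lemma:modamp}: applying $A_{k'}$ to an appropriately rescaled version of $F_i$ produces a polynomial $\widetilde F_i$ whose residue modulo $2^{K}$ exactly equals $g_i(x)$, at the cost of multiplying the degree by $2k'-1$, where $k'$ is chosen so that the amplified modulus dominates the error budget needed at the next layer. Second, represent $G$ as an integer polynomial in $\widetilde F_1,\ldots,\widetilde F_s$: for $\AND$, $\OR$, and NOT use the exact representations $\prod_i y_i$ and $1-\prod_i(1-y_i)$, paying a single degree-$s$ blowup per layer (absorbed into the $\log^{O(\ell^2)} n$ factor since $s\le\poly(n)$); for $\MOD_{p^a}$ use a classical constant-degree representation of the gate over $\Z_{p^a}$ and lift it back to the integers, then invoke \Cref{lemma:modamp} once more so that the output fits the template modulo $2^{k_j+e}$ with the required absolute error bound.

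\emph{Main obstacle.} The delicate point, and the source of the $\log^{O(\ell^2)} n$ factor in the degree, is the parameter bookkeeping. Preserving $e$ bits of output precision through a single layer requires $\Theta(e+\log n)$ bits of precision at its inputs, which forces the amplification parameter $k'$ to grow like $\Theta(e+\log n)$ and so multiplies the degree by the same factor at each of the $\ell$ layers, yielding $e^{O(\ell)}\log^{O(\ell^2)} n$ overall. The hard part will be arranging the rescaling constants at each level so that after cleaning, composing with the gate polynomial, and rescaling again, the error stays strictly below $2^{k_j-1}$ modulo $2^{k_j+e}$; otherwise the next round of amplification does not have enough room to succeed, and the error cascades out of control. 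A minor twist relative to the original Green et al. statement is that we insist on the specific form $|E(x)|\le 2^{k-1}$, which I would obtain by choosing $k$ one bit larger than the naive composition dictates, at the cost of hidden constants in the degree bound only.
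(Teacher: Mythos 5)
The paper cites this result directly from Green et al.\ and does not prove it, so there is no internal proof to compare against; but your sketch has two concrete gaps. The first is the ``cleaning'' step. \Cref{lemma:modamp} takes a polynomial whose value is already exactly $0$ or $1$ modulo $m$ and lifts it to be $0$ or $1$ modulo $m^k$; it does \emph{not} extract a specified bit of an integer. Your inductive invariant, $F_i(x) \equiv g_i(x)2^{k_{j-1}} + E_i(x) \pmod{2^{k_{j-1}+e}}$, is a statement about the bit pattern of $F_i(x)$, not about its residue modulo a small $m$, and no polynomial ``rescaling'' converts the former into the latter --- reading off a middle bit is a rounding operation with no low-degree integer-polynomial representation. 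The known arguments (Toda, Yao, Beigel--Tarui, Green et al.) therefore maintain an \emph{exact} congruence $F_i(x) \equiv g_i(x) \pmod{m_j}$ through the layers, with modulus amplification used to keep $m_j$ large enough for the next composition, and pass to the middle-bit form only once, at the very end, via a symmetric-gate/middle-bit construction.

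The second gap is the handling of AND and OR gates. The exact representations $\prod_i y_i$ and $1 - \prod_i (1 - y_i)$ have degree $s$ in the gate inputs, and since $s$ can be $\poly(n)$, this blowup cannot be ``absorbed into the $\log^{O(\ell^2)} n$ factor'' as you claim --- $\poly(n)$ is not $\polylog(n)$, and compounded over $\ell$ layers it yields overall degree $\poly(n)$, not $\polylog(n)$. Controlling the degree of unbounded fan-in AND and OR is the central technical obstacle in this entire line of work; it is handled by replacing the exact representation with an approximate $\polylog(n)$-degree one (Razborov--Smolensky probabilistic polynomials, or the deterministic Valiant--Vazirani/Beigel--Tarui substitute), at the cost of introducing error that must then be managed. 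Your plan omits this step entirely, so the stated degree bound $e^{O(\ell)}\log^{O(\ell^2)} n$ cannot emerge from it.
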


Note that the above theorem states that the $k$\textsuperscript{th} bit of $F(x)$ in binary always equals to $f(x)$ and that it's padded with $e-1$ zeros to its left, i.e the $(k+1)$\textsuperscript{th}, $(k+2)$\textsuperscript{th}, \dots, $(k+e-1)$\textsuperscript{th} bits are all guaranteed to be equal to $0$. It turns out that, implicit in their work, is the following slightly stronger version of the above result which lets us pad zeros on both sides of the output bit (i.e., the $k$\textsuperscript{th} bit).

\begin{theorem}[Implicit in Green et al. \cite{green1992power}]
\label{thm:green}
Let $f:\{0,1\}^n \to \{0,1\}$ be computable by $\ACC$ circuits of depth $\ell$ and size $\poly(n)$. Then for any $e \ge 1$ there exists an integer polynomial $F(x)$ of
degree $d=e^{O(\ell)}\log^{O(\ell^2)}n$ which satisfies the following: there is some $k  \ge e$ such that 
\begin{center}
$\forall x \in \{0,1\}^n, \ $ $F(x) = f(x)2^k + E(x) \pmod{2^{k+e}}$ 
\end{center}
for some error $E(x) \leq 2^{k-e}$.
\end{theorem}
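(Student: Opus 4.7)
My approach is to revisit Green et al.'s \cite{green1992power} proof of Theorem~\ref{thm:green_old} and argue that, with careful parameter tuning, the same construction yields the stronger error bound $E(x)\le 2^{k-e}$ claimed here. I believe no new conceptual ideas are needed beyond \cite{green1992power}; the statement is essentially a reparametrization of what their construction already delivers.

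The construction of \cite{green1992power} builds the integer polynomial $F(x)$ inductively through the layers of the $\ACC$ circuit. At each layer, one maintains an integer polynomial $F_i(x)$ representing the output $f_i$ of that layer in a shifted form $F_i(x)\equiv f_i(x)\cdot 2^{k_i}+(\text{noise})\pmod{2^{M_i}}$. The central technical device is the Beigel--Tarui modulus amplification (Lemma~\ref{lemma:modamp}): applied to any polynomial $P$ with $P\equiv f_i\pmod 2$, the polynomial $A_j(P)$ satisfies $A_j(P)\equiv f_i\pmod{2^j}$, and since $f_i\in\{0,1\}$, this means $A_j(P)\bmod 2^j$ is exactly the integer $f_i$. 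Thus bit $0$ of $A_j(P)$ equals $f_i$ while bits $1,\dots,j-1$ are all zero, giving a ``clean window'' of $j$ consecutive bits.

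The key observation is that if we run the construction with the Beigel--Tarui parameter set to $2e-1$ rather than $e$ at each level, each amplification produces a clean window of $2e-1$ consecutive bits (the output bit plus $2e-2$ zero bits above). Multiplying the resulting polynomial by $2^{e-1}$ then centers the output bit within this window, producing $e-1$ zero bits of padding on each side --- exactly what Theorem~\ref{thm:green} requires, with $k$ taken to be the shift of the output bit after the multiplication. The degree blowup from using parameter $2e-1$ instead of $e$ in the Beigel--Tarui amplification is a factor of $2^{O(\ell)}$, which is absorbed into the existing degree bound $e^{O(\ell)}\log^{O(\ell^2)}n$.

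The main obstacle will be verifying that this symmetric zero-padding invariant is preserved through the layered construction, in particular through the CRT-style combination of polynomials representing MOD$_{p^a}$ gates for different prime powers $p^a$ dividing a composite modulus. This combination involves weighted sums of shifted polynomials with specific CRT coefficients, and one must verify that the resulting polynomial still has zero padding on both sides of the output bit (of total width $\Theta(e)$) after each layer. I expect no conceptual difficulty here --- only careful bookkeeping of moduli, shift parameters, and coefficients at each layer, exactly as done implicitly in \cite{green1992power}.
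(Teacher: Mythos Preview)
Your proposal is correct and aligns with the paper's own treatment: the paper explicitly omits a proof of this theorem, stating only that it is ``essentially the same as that of \Cref{thm:green_old}, with some minor tweaks.'' Your proposed tweak---running the Beigel--Tarui amplification with parameter roughly $2e$ rather than $e$ to obtain a wider clean window, then shifting to center the output bit---is precisely such a minor reparametrization, and the degree overhead is indeed absorbed into the $e^{O(\ell)}$ factor.
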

Note the difference between the statements of \Cref{thm:green_old} and \Cref{thm:green}: while the former upper-bounds the error $E(x)$ by $2^{k-1}$ the latter bounds it by $2^{k-e}$, thus padding the output bit with $e-1$ zeros on both the sides. 

We remark that the proof of \Cref{thm:green} is essentially the same as that of \Cref{thm:green_old}, with some minor tweaks, and so we omit it here. We now show how to use \Cref{thm:green} to prove that low-degree torus polynomials approximate functions in $\ACC$.

\begin{corollary}
\label{cor:acc}
Let $f:\{0,1\}^n \to \{0,1\}$ be a function in $\ACC$. Then for every $\eps>0$, there is a torus polynomial of degree $\polylog(n/\eps)$ that $\eps$-approximates $f$. In other words,
$$
\adeg_{\eps}(f) \le \polylog(n/\eps).
$$
\end{corollary}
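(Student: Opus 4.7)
The plan is to apply the strengthened theorem of Green et al.\ (Theorem~\ref{thm:green}) directly, and then turn the resulting integer polynomial into a torus polynomial by dividing by an appropriate power of two. The two-sided zero padding in Theorem~\ref{thm:green} is precisely what makes this work in the toroidal sense.

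Concretely, I would first set $e = \lceil \log_2(1/\eps) \rceil + 1$, so that $2^{-e-1} \le \eps$. Since $f \in \ACC$ is computed by a circuit of constant depth $\ell = O(1)$ and polynomial size, Theorem~\ref{thm:green} produces an integer polynomial $F$ of degree $d = e^{O(\ell)} \log^{O(\ell^2)} n = \polylog(n/\eps)$ together with an integer $k \ge e$ such that, for every $x \in \{0,1\}^n$,
$$ F(x) \equiv f(x) \, 2^k + E(x) \pmod{2^{k+e}}, \qquad |E(x)| \le 2^{k-e}. $$

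Next, I would define the candidate torus polynomial by
$$ P(x) \;=\; \frac{F(x)}{2^{k+1}} \modone. $$
Because $F$ has integer coefficients and degree $d$, $P$ is a torus polynomial of degree $d$. Writing $F(x) = f(x)\, 2^k + E(x) + 2^{k+e} m(x)$ for some integer-valued $m(x)$ and dividing by $2^{k+1}$ yields
$$ P(x) \;\equiv\; \frac{f(x)}{2} + \frac{E(x)}{2^{k+1}} + 2^{e-1} m(x) \modone. $$
Since $e \ge 1$, the term $2^{e-1} m(x)$ is an integer and disappears modulo one, so
$$ \left| P(x) - \frac{f(x)}{2} \modone \right| \;\le\; \frac{|E(x)|}{2^{k+1}} \;\le\; 2^{-e-1} \;\le\; \eps. $$
Hence $\adeg_{\eps}(f) \le \deg(P) = d = \polylog(n/\eps)$, as required.

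The substantive content here lies entirely in Theorem~\ref{thm:green}: one needs an integer polynomial representation of $f$ in which the designated output bit is flanked on both sides by $e - 1$ zero bits, not only on one side. With the weaker Theorem~\ref{thm:green_old}, the bound $|E(x)| \le 2^{k-1}$ would translate into an error of $1/4$ after dividing by $2^{k+1}$, independent of $e$, which is not enough for $\eps$-approximation when $\eps$ is small. Once the two-sided padding is available, converting to a torus polynomial is essentially bookkeeping: choose $e = \Theta(\log(1/\eps))$, divide by $2^{k+1}$ to align with $f/2$ rather than $f$, and read off the approximation. Thus the main obstacle is not in this corollary itself but in establishing the slightly strengthened Green et al.\ statement — the rest of the argument is a short reduction.
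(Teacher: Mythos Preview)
Your proof is correct and follows essentially the same approach as the paper: apply the strengthened Green et al.\ theorem with $e = \Theta(\log(1/\eps))$, define $P(x) = F(x)/2^{k+1} \modone$, and use the two-sided padding to bound the error by $2^{-e-1} \le \eps$. The only cosmetic difference is that the paper phrases the error analysis via the binary expansion of $F(x)$ whereas you write $F(x) = f(x)2^k + E(x) + 2^{k+e} m(x)$ explicitly, but the computation is the same.
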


\begin{proof}
Let us assume that $f$ is computable by $\ACC$ circuits of size $\poly(n)$ and depth $\ell$. Recall that, by definition of $\ACC$, $\ell = O(1)$. Let $F(x)$ be the polynomial obtained by applying \Cref{thm:green} to $f$ with $e=\log(1/\eps)$ such that for some $k \ge e$
$$\forall x \in \{0,1\}^n, \  F(x) = f(x)2^k + E(x) \pmod{2^{k+e}}.$$
The degree of $F(x)$ is $d=e^{O(\ell)}\log^{O(\ell^2)}n = \polylog(n/\eps)$. Define the following torus polynomial
$$
P(x) = \frac{F(x)}{2^{k+1}} \modone.
$$
Clearly $\deg(P)=d$. For $i \ge 0$, let $F_i(x)$ denote the $i$\textsuperscript{th} bit of $F(x)$. Then, by the definition of $F$,
$$
\frac{F(x)}{2^{k+1}} \modone = \sum_{i=0}^{k} 2^{i-k-1} F_i(x) \modone = \frac{f(x)}{2} + \sum_{i=0}^{k-e} 2^{i-k-1} F_i(x) \modone.
$$
As $F_i(x) \in \{0,1\}$ for all $i$, we can bound
$$
\left| P(x) - \frac{f(x)}{2} \modone \right| \le 2^{-e} \le \eps \qquad \forall x \in \{0,1\}^n.
$$
\end{proof}

\section{Lower bound for symmetric torus polynomials}
\label{sec:lower}

In this section we prove a lower bound on the degree of \emph{symmetric} torus polynomials that approximate MAJORITY. It will be instructive to think of symmetric torus polynomials as symmetric real polynomials evaluated modulo one. We start by examining the question for delta functions.

For $x \in \{0,1\}^n$, let $|x|=\sum x_i$ denote its Hamming weight.
The delta function $$\Delta_w:\{0,1\}^n \to \{0,1\},$$ for $0 \le w \le n$, is defined as
$$
\Delta_w(x) =
\begin{cases}
1 & |x|=w\\
0 & \text{otherwise}
\end{cases}\;.
$$
\begin{lemma}
\label{thm:delta-lb}
Let $n,d$ be positive integers such that for every $0 \le w \le n$ there exists a symmetric torus polynomial $Q_w:\{0,1\}^n \to \T$ of degree $d$ that $\frac{1}{20n}$-approximates $\Delta_w(x)$. Then $d = \Omega\left(\sqrt{\frac{ n}{\log n}}\right)$.
\end{lemma}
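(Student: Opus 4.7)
My plan is to convert the mod-$1$ approximations $Q_w$ into a real, $L^\infty$ approximation of $\Delta_w$ with a controlled degree blow-up, and then apply the classical Nisan--Szegedy lower bound on the real $\eps$-approximation degree of symmetric Boolean functions.

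The setup is routine: since $Q_w$ is symmetric, I identify it with a univariate real polynomial $P_w:\{0,\ldots,n\}\to\R$ of degree $d$ with $|P_w(k)-\Delta_w(k)/2\modone|\le 1/(20n)$ for every $k$. Next I pass to the complex exponential $F_w(k)=e^{2\pi i P_w(k)}$; by the hypothesis, $|F_w(k)-(-1)^{\Delta_w(k)}|=O(1/n)$ pointwise, so the real-valued function $\tfrac{1}{2}(1-\mathrm{Re}\,F_w(k))$ approximates $\Delta_w$ to within $O(1/n)$ on $\{0,\ldots,n\}$. I then truncate the Taylor series of cosine at $M$ terms, producing a polynomial in $k$ of degree $2Md$ that approximates $(-1)^{\Delta_w(k)}$ with truncation error at most $(2\pi\|P_w\|_\infty)^{2M+2}/(2M+2)!$; a Stirling estimate shows this is $O(1/n)$ as soon as $M=\Omega(\|P_w\|_\infty+\log n)$. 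The conclusion then follows by Nisan--Szegedy: any symmetric real polynomial approximating $\Delta_w$ on $\{0,\ldots,n\}$ to within $O(1/n)$ has degree $\Omega(\sqrt{n\log n})$, so $2Md=\Omega(\sqrt{n\log n})$, and choosing $M=O(\log n)$ yields $d=\Omega(\sqrt{n/\log n})$.

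The hard part will be bounding $\|P_w\|_\infty$ by $O(\log n)$ (or at worst $\polylog n$). Since $Q_w$ pins down $P_w$ only modulo the lattice of integer-valued polynomials of degree $\le d$, I am free to subtract any such polynomial to minimize the infinity norm; but the naive lift---taking the coefficients of $P_w$ in the binomial basis $\binom{k}{0},\ldots,\binom{k}{d}$ to lie in $[0,1)$---only yields $\|P_w\|_\infty\le 2\binom{n}{d}$, which is superpolynomial when $d\gg\log n$ and ruins the Taylor error bound. My plan to handle this is to exploit the extremely restrictive hypothesis---that for every $k\in\{0,\ldots,n\}$ the mod-$1$ value of $P_w(k)$ must lie in the tiny target set $\{0,1/2\}\pm 1/(20n)$---to argue, via a Diophantine/lattice-reduction argument, that the binomial-basis coefficients of $P_w$ must cluster near rationals with small denominators, which then allows one to reduce to a lift with $\|P_w\|_\infty=\polylog(n)$. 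This Diophantine step is the technical crux; I expect it to leverage, in an essential way, that we have such approximations $P_w$ for every $w$ simultaneously, so that collective information about the coefficients can be pooled.
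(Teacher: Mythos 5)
Your plan takes a genuinely different route from the paper, and the step you yourself flag as the ``technical crux'' is a real gap, not a detail to be cleaned up later. The paper's proof is a pure counting argument that never lifts $Q_w$ to a bounded real polynomial at all. It combines the $\Delta_w$-approximators into an approximator $Q_f = \sum_{i \in f^{-1}(1)} Q_i \pmod 1$ for an \emph{arbitrary} symmetric Boolean function $f$ (the errors add up to $1/20$), rounds the real coefficients of $Q_f$ (as a polynomial in $|x|$) to multiples of $1/2^k$ with $k = O(d\log n)$ at the cost of an additional $(d+1)n^d/2^k \le 1/20$, and then counts: there are $2^n$ symmetric Boolean functions, each requiring a distinct rounded polynomial since a single torus polynomial can $1/10$-approximate at most one symmetric function, yet there are only $2^{(k+1)(d+1)}$ rounded polynomials. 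This forces $(d+1)(k+1) = \Omega(n)$, hence $d^2 \log n = \Omega(n)$. Because the only reduction needed is modulo one, no control on the magnitude of any real lift is ever required, and the problem you are wrestling with never arises.

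Your exponentiation-plus-Taylor reduction to a real approximate-degree lower bound would be a nice alternative if it could be closed, but the bound $\|P_w\|_\infty = O(\log n)$ is not established and is not obviously true. The hypothesis only determines $P_w$ modulo the lattice of integer-valued symmetric polynomials of degree $\le d$; the natural reduction in the binomial basis, taking each coefficient into $[-1/2,1/2)$, already gives $\|P_w\|_\infty$ as large as $\binom{n}{d}$, and the pointwise $1/(20n)$ constraint on $P_w \pmod 1$ constrains only the fractional parts of the values $P_w(k)$ and says nothing directly about the integer parts you would need to remove by subtracting a lattice element. Nothing in the statement forces the closest lattice point to be within $O(\log n)$ in sup norm, and you offer no mechanism for the ``pooling over all $w$'' you hope will supply one. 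Note also that if the best achievable bound were only $\|P_w\|_\infty = \polylog(n)$, your Taylor truncation parameter $M$ must grow accordingly and the final bound $d = \Omega(\sqrt{n\log n}/M)$ degrades below the target $\Omega(\sqrt{n/\log n})$. Finally, the real-degree lower bound you invoke for $\Delta_w$ at error $\Theta(1/n)$, namely $\Omega(\sqrt{n\log n})$, is correct but is due to later work on $\eps$-dependent approximate degree (de Wolf, Bun--Thaler, Sherstov), not to Nisan--Szegedy, whose bound is $\Omega(\sqrt{n})$ at constant error. In short: interesting idea, but as written it rests on an unproved and non-obvious Diophantine lemma that the paper's much shorter counting argument sidesteps entirely.
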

\begin{proof}
Let $\mathrm{Sym}(n)$ denote the set of symmetric Boolean functions in $n$ variables and let $\mathrm{SymPoly}_{d,k}(n)$ denote the set of symmetric torus polynomials in $n$ variables of degree $d$ whose coefficients are of the form $q/2^k$ for $q \in \{-(2^k -1), \ldots, 0, \ldots, 2^k -1\}$.

Let $f$ be an arbitrary function in $\mathrm{Sym}(n)$. Abusing notation, we let $f^{-1}(1)$ denote the set of weights of the layers of the Hamming cube where $f$ takes value $1$. Now define the torus polynomial $Q_f$ as
$$Q_f(x) = \sum_{i \in f^{-1}(1)} Q_i(x) \modone.$$
It follows that $Q_f$ is a symmetric torus polynomial of degree $d$ that $\frac{1}{20}$-approximates $f$. Since $Q_f$ is a symmetric torus polynomial, namely a symmetric real polynomial modulo one, it may be written without loss of generality as
$$Q_f(x) = \sum_{j = 0}^d c_j \left(\sum x_i\right)^j \modone,$$
where $c_j \in [0,1)$. Let $k \ge 0$ be an integer whose value we will fix later. For $0 \le j \le d$, let $q_j \in \{-(2^k -1 ), \ldots, 0, \ldots, 2^k-1\}$ be such that
$$\left| \frac{q_j}{2^k} - c_j \right| \le \frac{1}{2^k},$$
and define $Q'_f$ to be the polynomial
$$Q'_f(x) = \sum_{j = 0}^d \frac{q_j}{2^k} \cdot \left(\sum x_i\right)^j \modone.$$
Observe that for every $x \in \{0,1\}^n$,
$$\left|Q_f(x) - Q'_f(x) \modone\right| \le \sum_{j = 0}^d \left| \frac{q_j}{2^k} - c_j \right|\cdot {|x|}^j \le \frac{(d+1)\cdot n^d}{2^k}.$$
If $k$ is such that $\frac{(d+1)\cdot n^d}{2^k} \le \frac{1}{20}$ then
$$\left\|Q_f - Q'_f \modone \right\|_{\infty} \le \frac{1}{20}, $$
and so
$$\left\|\frac{f}{2} - Q'_f \modone \right\|_{\infty} \le \left\|\frac{f}{2} - Q_f \modone\right\|_{\infty} + \left\|Q_f - Q'_f \modone\right\|_{\infty} \le  \frac{1}{10}.$$
Note that we can choose $k=O(d \log n)$ while still satisfying the required condition on $k$.

So far we have shown that for every $f \in \mathrm{Sym}(n)$ there is a polynomial $Q_f \in \mathrm{SymPoly}_{d,k}(n)$ that $1/10$-approximates $f$ where $k = O(d \log n)$. In the other direction, one can easily verify that every polynomial in $\mathrm{SymPoly}_{d,k}(n)$ can $1/10$-approximate \textit{at most} one function in $\mathrm{Sym}(n)$. This implies that
$$ |\mathrm{SymPoly}_{d,k}(n)| \ge |\mathrm{Sym}(n)|.$$
Plugging in $|\mathrm{SymPoly}_{d,k}(n)| = 2^{(k+1)(d+1)}$ and $|\mathrm{Sym}(n)| = 2^n$, and using $k = O(d \log n)$, yields the bound $d = \Omega\left(\sqrt{\frac{n}{\log n}}\right)$.
\end{proof}
Before we proceed, we formally define MAJORITY on $n$ bits, denoted by $\mathrm{Maj}_n(x)$, as
$$
\mathrm{Maj}_n(x) =
\begin{cases}
1 & |x| \ge \frac{n}{2}\\
0 & \text{otherwise}
\end{cases}\;.
$$
\begin{lemma}
\label{lem:maj}
If there is a symmetric torus polynomial of degree $o\left( \sqrt{\frac{ n}{\log n}}\right)$ that $\frac{1}{20n}$-approximates $\mathrm{Maj}_n(x)$, then for every $0 \le w \le n$ there is a symmetric torus polynomial of degree $o\left( \sqrt{\frac{ n}{\log n}}\right)$ that $\frac{1}{20 n}$-approximates $\Delta_w(x)$.
\end{lemma}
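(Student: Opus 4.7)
The plan is to build approximators for each $\Delta_w$ from approximators of threshold functions using the identity $\Delta_w = \mathrm{Thr}_w^n - \mathrm{Thr}_{w+1}^n$, where $\mathrm{Thr}_t^n(y) = 1$ if $|y|\ge t$ and $0$ otherwise (with the convention $\mathrm{Thr}_{n+1}^n\equiv 0$). It therefore suffices to produce, for each $w\in\{0,\dots,n\}$, a symmetric torus polynomial of degree $o(\sqrt{n/\log n})$ on $n$ bits that $\frac{1}{40n}$-approximates $\frac{1}{2}\mathrm{Thr}_w^n$.

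To obtain all thresholds uniformly, I would apply the hypothesis at $2n$ variables rather than at $n$ (since $\sqrt{2n/\log(2n)} = \Theta(\sqrt{n/\log n})$, the asymptotic degree bound is preserved). This yields a symmetric torus polynomial $Q:\{0,1\}^{2n}\to \T$ of degree $d = o(\sqrt{n/\log n})$ that $\frac{1}{20\cdot 2n} = \frac{1}{40n}$-approximates $\mathrm{Maj}_{2n}$. Writing $Q(z) = q(|z|)$ for a univariate real polynomial $q$ of degree $d$, for each $w$ I would define
\[
T_w(y) := Q\bigl(y,\underbrace{1,\dots,1}_{n-w},\underbrace{0,\dots,0}_{w}\bigr) = q(|y|+n-w), \qquad y\in\{0,1\}^n.
\]
This is again a symmetric torus polynomial on $n$ variables of degree $d$. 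Since $\mathrm{Maj}_{2n}(y,1^{n-w},0^w) = 1$ iff $|y|+(n-w)\ge n$ iff $|y|\ge w$, the restricted function equals $\mathrm{Thr}_w^n(y)$, and hence $T_w$ is a $\frac{1}{40n}$-approximator of $\frac{1}{2}\mathrm{Thr}_w^n$.

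Finally, I would set $Q_w := T_w - T_{w+1}$ for $0\le w<n$, and $Q_n := T_n$. Each $Q_w$ is a symmetric torus polynomial on $n$ variables of degree $d = o(\sqrt{n/\log n})$. By the triangle inequality,
\[
\left\|Q_w - \frac{1}{2}\Delta_w \bmod 1\right\|_\infty \le 2\cdot\frac{1}{40n} = \frac{1}{20n} \quad (0\le w<n),
\]
while for $w=n$ the error is bounded directly by $\frac{1}{40n}\le \frac{1}{20n}$. The argument is essentially routine; the only point worth noting is that restricting a symmetric torus polynomial along a constant assignment preserves both symmetry and degree, which is immediate from the description $Q(z)=q(|z|)$ as a univariate polynomial in the Hamming weight. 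There is no real obstacle, as long as the hypothesis can be invoked at $2n$ (which is fine under the natural asymptotic reading of the $o$-notation).
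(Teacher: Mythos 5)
Your proof is correct and follows essentially the same route as the paper's: both decompose $\Delta_w$ as a difference of two threshold functions, realize each threshold by hard-wiring a constant suffix into a majority approximator on roughly twice as many variables (you use $2n$, the paper $2n+1$ --- a cosmetic difference), and conclude by the triangle inequality.
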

\begin{proof}
Fix $w$. Let $\Delta_{\ge w}(x)$ denote the function that takes value $1$ iff $|x| \ge w$. Then we can write
\begin{equation}
\label{eq:delta-maj}
\Delta_{\ge w}(x_1, \ldots, x_n) = \mathrm{Maj}_{2n+1}(x_1, \ldots, x_n, c_1, \ldots c_{n+1}),
\end{equation}
where $c \in \{0,1\}^{n+1}$ is the string whose first $n - w +1$ bits are set to $1$ and the rest of the bits are set to $0$. Let $Q(x_1, \ldots x_{2n+1})$ be the symmetric torus polynomial in $2n+1$ variables that $\frac{1}{20(2n+1)}$-approximates $\mathrm{Maj}_{2n+1}(x)$. Let $Q_{\ge w}(x_1, \ldots, x_n)$ be the torus polynomial defined as
$$Q_{\ge w}(x_1, \ldots x_n) = Q(x_1, \ldots, x_n, c_1, \ldots, c_{n+1}),$$
where $c \in \{0,1\}^{n+1}$ is as defined above. It follows from \eqref{eq:delta-maj} that $Q_{\ge w}(x_1, \ldots, x_n)$ $\frac{1}{40n}$-approximates $\Delta_w(x_1, \ldots, x_n)$. Furthermore, $$deg(Q_{\ge w}) = o\left( \sqrt{\frac{ n}{\log n}}\right).$$ Similarly, we can obtain a symmetric torus polynomial $Q_{\ge w+1}(x_1, \ldots, x_n)$ that $\frac{1}{40n}$-approximates $\Delta_{\ge w+1}(x_1, \ldots, x_n)$ such that
$$deg(Q_{\ge w+1}) = o\left( \sqrt{\frac{ n}{\log n}}\right).$$
Note that
$$ \frac{\Delta_w(x)}{2} \pmod 1 = \left(\frac{\Delta_{\ge w}(x)}{2} - \frac{\Delta_{\ge w+1}(x)}{2}\right) \pmod 1.$$
Defining $Q_w(x) = Q_{\ge w}(x) - Q_{\ge w+1}(x) \modone$, it follows that
$$\left\| \frac{\Delta_w(x)}{2} - Q_w(x) \modone\right\|_\infty \le \frac{1}{20n}.$$
This completes the proof.
\end{proof}
The main result of this section now follows from \Cref{thm:delta-lb} and \Cref{lem:maj}:
\begin{corollary}
\label{cor:maj}
Any symmetric torus polynomial of degree $d$ that $\frac{1}{20n}$-approximates $\mathrm{Maj}_n(x)$ must satisfy $d = \Omega\left( \sqrt{\frac{ n}{\log n}}\right)$.
\end{corollary}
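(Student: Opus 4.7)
The plan is to derive the corollary as an essentially immediate consequence of the two preceding lemmas, by contraposition. Suppose for contradiction that for every sufficiently large $n$ there exists a symmetric torus polynomial of degree $d(n) = o(\sqrt{n/\log n})$ that $\frac{1}{20n}$-approximates $\mathrm{Maj}_n$. I want to feed this hypothesis into Lemma \ref{lem:maj} and then use Lemma \ref{thm:delta-lb} to reach a contradiction.

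First, I would apply the hypothesis at input size $m = 2n+1$, which is the size that Lemma \ref{lem:maj} uses internally. Since $\sqrt{(2n+1)/\log(2n+1)} = \Theta(\sqrt{n/\log n})$, the resulting approximation of $\mathrm{Maj}_{2n+1}$ still has degree $o(\sqrt{n/\log n})$, and its error $\frac{1}{20(2n+1)}$ is even sharper than what the proof of Lemma \ref{lem:maj} consumes. So Lemma \ref{lem:maj} applies and yields, for every $0 \le w \le n$, a symmetric torus polynomial of degree $o(\sqrt{n/\log n})$ that $\frac{1}{20n}$-approximates $\Delta_w$ on $n$ variables.

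Next, I would invoke Lemma \ref{thm:delta-lb}, whose hypothesis (a symmetric torus polynomial of some common degree $d$ that $\frac{1}{20n}$-approximates every $\Delta_w$) is now satisfied with $d = o(\sqrt{n/\log n})$. The conclusion of that lemma forces $d = \Omega(\sqrt{n/\log n})$, which is the desired contradiction, and the corollary follows.

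I do not expect any real obstacle here: the substantive content lives in Lemma \ref{thm:delta-lb} (a counting argument comparing the number $2^n$ of symmetric Boolean functions against the number $2^{(k+1)(d+1)}$ of rounded symmetric torus polynomials, with $k = O(d \log n)$) and in Lemma \ref{lem:maj} (the observation that each $\Delta_w$ is a difference of two thresholds, each realized as a restriction of a larger $\mathrm{Maj}_{2n+1}$ obtained by padding with a fixed constant string). All that remains for the corollary is to verify that the approximation errors and input lengths chain correctly along the reduction, which they do with slack, so the combination is a clean two-line deduction.
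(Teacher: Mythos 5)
Your proposal matches the paper's intended proof exactly: the paper simply states that \Cref{cor:maj} ``follows from \Cref{thm:delta-lb} and \Cref{lem:maj},'' which is precisely the contrapositive chaining you carry out. Your extra care in noting that the hypothesis must be applied at input length $2n+1$ (since the proof of \Cref{lem:maj} internally restricts a $(2n+1)$-variable majority), and that $\sqrt{(2n+1)/\log(2n+1)} = \Theta(\sqrt{n/\log n})$ so the asymptotic degree bound is preserved, is a correct and welcome clarification of a detail the paper glosses over.
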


\section{Upper bound for delta functions}
\label{sec:upper}
In this section, we prove the somewhat surprising result that if the approximation parameter $\eps>0$ is not too small (say, $\eps$ is a small constant), then the delta function $\Delta_w$ can be nontrivially approximated by \textit{symmetric} low-degree torus polynomials.

\begin{lemma}
\label{lemma:delta}
For every $0 \le w \le n$ and $\eps>0$, there is a symmetric torus polynomial of degree $\frac{\polylog(n/\eps)}{\eps}$ that $\eps$-approximates $\Delta_w(x)$, and thus
$$
\adeg_{\eps}(\Delta_w) \le \frac{\polylog(n/\eps)}{\eps}.
$$
\end{lemma}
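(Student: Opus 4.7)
My plan is to reduce the problem to the univariate setting (since a symmetric torus polynomial of degree $d$ on $\{0,1\}^n$ is equivalent to a real polynomial of degree $d$ in the single variable $s = |x|$, taken modulo one). So it suffices to exhibit a real polynomial $p(s)$ of degree $\polylog(n/\eps)/\eps$ with $|p(s) - (1/2)\Delta_w(s) \bmod 1| \le \eps$ for every integer $s \in \{0, \ldots, n\}$, where $\Delta_w(s) = \mathbf{1}[s = w]$.

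The basic building block is a Fermat-based torus polynomial. For a prime $P$, set
$$F_P(s) := \frac{q \cdot (1 - (s - w)^{P-1})}{P} \bmod 1$$
with $q \in \Z$ chosen so that $q/P \approx 1/2$. By Fermat's little theorem $(s-w)^{P-1} \equiv 0 \pmod P$ if $s \equiv w \bmod P$ and $\equiv 1 \pmod P$ otherwise, so $F_P$ is a torus polynomial of degree $P-1$ approximating $(1/2)\mathbf{1}[s \equiv w \bmod P]$ with error $O(1/P)$. Taking $P$ to be a prime of size $\Theta(1/\eps)$ makes the error $O(\eps)$. When $P > n$ (the regime $\eps \lesssim 1/n$) the indicator equals $\Delta_w$ on $\{0,\ldots,n\}$, so $F_P$ is directly the desired polynomial of degree $O(1/\eps)$, which lies within the stated bound.

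For the regime $P \le n$, the polynomial $F_P$ differs from $(1/2)\Delta_w$ on the ``bad set'' $T := \{s \in \{0,\ldots,n\} : s \equiv w \bmod P\} \setminus \{w\}$, an arithmetic progression of length $O(n/P) = O(n\eps)$, where $F_P \approx 1/2$ but the target is $0$. To correct this I would add a torus polynomial approximating $-(1/2)\mathbf{1}[s \in T]$. Via the reparametrization $k = (s-w)/P$, the set $T$ corresponds to integer points in a range of size $O(n/P)$ excluding $k = 0$, so approximating $\mathbf{1}[s \in T]$ reduces to approximating $(1/2)\Delta_0$ on a smaller input range. Recursing $O(\log n / \log(1/\eps))$ times shrinks the range to constant size; each recursion level adds degree $P - 1 = O(1/\eps)$, giving a total degree of $O(\log n / (\eps \log(1/\eps)))$, which fits within $\polylog(n/\eps)/\eps$.

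The main obstacle I expect is implementing the recursion/correction as a genuine torus polynomial in $s$: the natural implementation---multiplying by a mask approximating $\mathbf{1}[s \in AP]$ and then composing with a polynomial in $k = (s-w)/P$---is a product of torus polynomials, which is not itself a torus polynomial. Circumventing this requires either an algebraic trick (using the Fermat identity to absorb the denominators exactly, so the product becomes a genuine polynomial with integer coefficients divided by a power of $P$) or a completely different additive decomposition of $\Delta_w$. This algebraic bookkeeping, together with careful error tracking through the recursion, is where the delicate work lies and is presumably the source of the $1/\eps$ factor in the degree bound rather than the cleaner $\log(1/\eps)$ one might hope for.
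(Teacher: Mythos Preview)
Your building block—the Fermat polynomial $F_P$ giving $(1/2)\mathbf{1}[s \equiv w \pmod P]$ up to error $O(1/P)$—is exactly right, and the paper uses it too (via \Cref{lemma:poly_Fp_variant}). But the recursive correction step has a genuine gap, which you correctly flag but do not resolve. The problem is this: if $h(k)$ is a torus polynomial approximating $(1/2)\Delta_0(k)$ on the \emph{integer} range $\{-\lfloor w/P\rfloor,\ldots,\lfloor(n-w)/P\rfloor\}$, then $h((s-w)/P)$ is indeed a real polynomial in $s$, but the approximation guarantee for $h$ holds only at integer arguments. For $s\not\equiv w\pmod P$ the argument $(s-w)/P$ is not an integer, and $h((s-w)/P)\bmod 1$ is completely uncontrolled there. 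You would therefore need to multiply by a mask $\approx\mathbf{1}[s\equiv w\pmod P]$, and as you note, a product of torus polynomials is not a torus polynomial. I do not see how to ``absorb the denominators exactly'' without the degree growing multiplicatively rather than additively across recursion levels, which would destroy the bound.

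The paper sidesteps this entirely with a purely additive construction over \emph{many} primes. Let $p_1,\ldots,p_t$ be the first $t$ primes, and for each $p$ let $f_p(s)=\mathbf{1}[s\equiv w\pmod p]$, which is computed by a degree-$(p-1)$ polynomial over $\F_p$. By \Cref{lemma:poly_Fp_variant} with $\alpha=1/(2t)$ there is a torus polynomial $Q_p$ of degree $O(p\log(t/\eps))$ with $\|Q_p-\tfrac{1}{2t}f_p\bmod 1\|_\infty\le\eps/(2t)$. Set $Q=\sum_{p}Q_p$. At $s=w$ every $f_p=1$, so $Q\approx t\cdot\tfrac{1}{2t}=1/2$. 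At $s\ne w$ one has $f_p(s)=1$ only when $p\mid(s-w)$, and since $|s-w|\le n$ this holds for at most $\log n$ primes; hence $|Q(s)\bmod 1|\le \eps/2+(\log n)/(2t)$. Taking $t=\Theta((\log n)/\eps)$ makes this at most $\eps$, and the degree is $O(p_t\log(t/\eps))=O(t\log t\cdot\log(t/\eps))=\polylog(n/\eps)/\eps$. The ``completely different additive decomposition'' you alluded to is precisely this: rather than one prime plus recursive cleanup, use many primes with small weights, so that the false positives—bounded by the number of prime divisors of $s-w$—contribute negligibly.
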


\begin{proof}
For any prime $p \ge 2$, let $f_p:\{0,1\}^n \to \{0,1\}$ denote the function
$$
f_p(x) =
\begin{cases}
1 & |x| \equiv w \pmod{p}\\
0 & \text{otherwise}
\end{cases}\;.
$$
It is computed by the $\F_p$-polynomial of degree $p-1$
$$
f_p(x) = 1-\left(\sum x_i - w\right)^{p-1} \pmod{p}.
$$

Let $\cP=\{p_1,\ldots,p_t\}$ be the first $t$ primes, for $t$ to be chosen later. Applying \Cref{lemma:poly_Fp_variant} with $\alpha=1/2t$ and error $\eps/2t$,
for each $p \in \cP$ we obtain a torus polynomial $Q_p:\{0,1\} \to \T$ of degree $O(p\log(t/\eps))$ such that
$$
\left\|Q_p - \frac{1}{2t} f_p \modone \right\|_{\infty} \le \frac{\eps}{2t}.
$$
Define
$$
Q(x) = \sum_{p \in \cP} Q_p(x) \modone.
$$
We claim that $Q$ is a symmetric torus polynomial that $\eps$-approximates $\Delta_w$.

Consider first $x \in \{0,1\}^n$ with $|x|=w$.
In this case, for each $p \in \cP$ we have $f_{p}(x)=1$, $|Q_{p}(x) - \frac{1}{2t} \modone| \le \eps/2t$ and hence
$$
\left|Q(x) - \frac{1}{2} \modone\right| \le \eps/2.
$$
Next, assume that $|x| \ne w$. Then $f_p(x)=1$ only if $p$ divides $|x|-w$. As there are at
most $\log n$ such primes, we have that
$$
\left|Q(x) \modone\right| \le \frac{\eps}{2} + \frac{\log n}{t}.
$$
To conclude we choose $t=O(\log(n)/\eps)$. The largest prime in $\cP$ has size $O(t \log t)$ which means that
$$
\adeg_{\eps}(f) \le \deg(Q) = \max\{\deg(Q_p): p \in \cP\} \le O(t \log t \cdot \log(t/\eps)) = \frac{\polylog(n/\eps)}{\eps}.
$$
To see why $Q$ is symmetric, observe that \Cref{lemma:poly_Fp_variant} preserves symmetry.
\end{proof}

\paragraph{Acknowledgements.} We thank Marco Carmosino for useful discussions regarding this work. We would also like to thank Eric Allender and Sivakanth Gopi for helpful feedback on earlier drafts of this work.

\bibliographystyle{alpha}
\bibliography{torus}

\end{document}